\newtheorem{assumption}{Assumption}
\newtheorem{proposition}{Proposition}
\pgfplotsset{compat=1.17}
\definecolor{red}{HTML}{D81B60}
\definecolor{blue}{HTML}{1E88E5}
\definecolor{orange}{HTML}{FFAC00}
\definecolor{green}{HTML}{004D40}
\definecolor{orangetxt}{HTML}{BD8000}
\DeclareMathAlphabet{\mathcalligra}{T1}{calligra}{m}{k}
\DeclareMathAlphabet{\mathpzc}{OT1}{pzc}{m}{it}
\title{Decomposing Inequalities using Machine Learning\\[1ex] and Overcoming Common Support Issues}
\author{Emmanuel Flachaire\footnote{Aix-Marseille Universit\'e, AMSE. 
emmanuel.flachaire@univ-amu.fr.
} \ and Bertille Picard\footnote{Univ Rennes, Ensai, CNRS, CREST-UMR 9194, F-35000 Rennes, France. bertille.picard@ensai.fr \\ We thank Stephen Bazen, Arthur Charpentier, Michael Knaus, and Anna Simoni, as well as participants at various seminars, conferences, and workshops where this paper was presented, for their helpful comments and suggestions. All remaining errors are our own. 
The project leading to this paper has received funding from the “France 2030” investment plan managed by the NRA (reference: ANR-17-EURE-0020), and from the Excellence Initiative of Aix-Marseille University - A*MIDEX.
}}
\date{November 2025}
\begin{document}

\maketitle

\begin{abstract}

The Kitagawa-Oaxaca-Blinder decomposition splits the difference in means between two groups into an explained part, due to observable factors, and an unexplained part. In this paper, we reformulate this framework using potential outcomes, highlighting the critical role of the reference outcome. To address limitations like common support and model misspecification, we extend Neumark's (1988) weighted reference approach with a doubly robust estimator. Using Neyman orthogonality and double machine learning, our method avoids trimming and extrapolation. This improves flexibility and robustness, as illustrated by two empirical applications. Nevertheless, we also highlight that the decomposition based on the Neumark reference outcome is particularly sensitive to the inclusion of irrelevant explanatory variables. 

\bigskip

\bigskip
  \noindent {\sl JEL}: C46, D31 

  \noindent {\em Keywords}: inequality, decomposition methods,  machine learning
\end{abstract}

\vfill

\bigskip

\setcounter{tocdepth}{2}

\bigskip

\newpage

\section{Introduction}

The Kitagawa-Oaxaca-Blinder decomposition (\citealt{kitagawa1955components}; \citealt{oaxaca1973male}; \citealt{blinder1973wage}) is widely used in empirical studies to study differences in distributional statistics between two groups or two periods \citep{DiFoLe:96, FiFoLe:09, FiFoLe:18}.\footnote{While the decomposition method is also referred to as the ``Oaxaca-Blinder decomposition'' in the economics literature, similar principles had been introduced earlier by Kitagawa in demographics and sociology. The links between these contributions are discussed in \citet{oaxaca2025oaxaca}. Given the generality of our counterfactual framework in this paper, we do not attribute the method to any single author and instead cite all three contributions.} It separates an observed difference into an explained part due to individuals' observed characteristics and an unexplained part. This method is typically used to decompose average wage differences between men and women.

In classical decompositions, two major methodological choices are made. The first methodological choice consists of selecting a reference outcome. In the context of gender wage gap decompositions, the reference outcome is the wage an individual would receive in a counterfactual setting where men and women would be paid according to the same wage model. This means that each characteristic, such as education or experience, determines the reference outcome in the same way for men and women. In other words, the reference outcome is a counterfactual outcome that would be identical for an individual in one group and an individual in another group, sharing the same observable characteristics. Classical decompositions in the literature assume that the reference outcome is either the wage model of men or of women. Empirically, the final decompositions vary according to this methodological choice, with no consensus which group's potential outcome should be chosen as the reference \citep{FoLeFi:11}. The second traditional methodological choice is to impose assumptions about outcome models, such as linearity, invertibility, and exogeneity, to perform ordinary least squares estimates. These standard assumptions can be relaxed by using reweighting techniques \citep{FoLeFi:11} and machine learning methods \citep{strittmatter2021gender}. However, these techniques require an additional common support assumption.

In this paper, we propose a method to combine the advantages of both approaches, avoiding the restrictive assumptions of a standard linear model through machine learning and reweighting methods, without relying on the common support assumption. This is achieved by changing the reference outcome. Instead of choosing the outcome of one of the two groups as the reference outcome, we propose using a propensity score-weighted combination of the two potential outcomes as a reference, following \citet{neumark1988employers} in the standard linear framework.

To understand why changing the reference outcome helps relaxing these standard assumptions simultaneously, we underline the idea that the common support hypothesis is needed whenever both reweighting techniques are used and the outcome of one of the two groups is taken as the reference outcome. Indeed, the combination of these two choices forces us to reweight the observed outcomes by the probability of belonging to the reference group, conditional on individual characteristics. This implies divisions by the conditional probability of belonging to the reference group, which is feasible only when these probabilities are bounded away from zero (hence the common support assumption). Intuitively, therefore, it is necessary for individuals to have comparable counterparts in the reference group, to avoid unstable weights driven by very small estimated probabilities.  While this condition may be credible in situations close to randomized controlled trials, as in impact evaluation programs, it becomes more questionable in the context of inequality decomposition. For instance, using a large data set of 1.7 million employees in Switzerland, \citet{strittmatter2021gender} find that with the 10 most important variables for explaining male wages, 55\% of women in the private sector have no comparable men with respect to these variables. They show that the results are sensitive to the trimming used to have observationally comparable men and women, and thus that common support is a key issue. However, dividing by conditional probabilities is no longer required when choosing a reference outcome defined as a propensity score-weighted sum of the two potential outcomes. This choice of reference outcome was originally proposed by \citet{neumark1988employers} within a linear framework, where it was limited to the use of only a few categorical variables. This reference outcome is sensitive to gender composition within sectors and is consistent with the standard concept of equal redistribution conditional on a set of characteristics. We extend this reference outcome to the case of a large set of characteristics, either categorical or continuous, and derive several estimators of the unexplained part of the mean difference. Unlike those based on standard reference outcomes, we show that the resulting reweighting estimators do not require the imposition of the common support hypothesis. 

To tackle the assumption on the correct parametric specification of the reference outcome or the propensity score regressions, we consider Machine Learning (ML) techniques. These methods allow us to estimate potentially complex non-linear functions by mapping inputs to outputs and/or by performing variable selection when a large number of characteristics is available. Although it may be tempting to introduce ML estimates of unknown functions directly into standard estimators, this naive approach is not recommended because it provides biased estimators that converge at slower rates than the ${n}^{-1/2}$ rate obtained in the parametric framework. A more appropriate approach has been developed in recent literature on treatment effects. \citet{chernozhukov2018double} provided a so-called double machine learning procedure to recover  debiased estimators that achieve root-n convergence rates for the Average Treatment Effect (ATE) and Average Treatment Effect on the Treated (ATT). When the unexplained part in the Kitagawa-Oaxaca-Blinder decomposition can be interpreted as an ATT \citep{FoLeFi:11}, double machine learning can thus be used to estimate it with good statistical properties \citep{strittmatter2021gender}.

Our main contribution can be summarized as follows: drawing on Rubin's conceptual framework \citep{rubin1974estimating} and reformulating the decomposition problem in terms of potential outcomes, we rewrite the unexplained part when the reference outcome is directly inspired by what is proposed by \citet{neumark1988employers}. This means that we choose as a reference outcome a combination of the potential outcomes of the two groups, weighted by conditional probabilities of being in each group. While the unexplained part can be analogous to an Average Treatment effect on the Treated (ATT) or an Average Treatment effect on the Untreated (ATU) when choosing standard reference outcomes, this is no longer the case when we choose this reference outcome. Therefore, to apply the double machine learning method, we check that the orthogonality condition remains satisfied for this specific object. Furthermore, we demonstrate that this new construction of the unexplained part, based on this outcome, avoids any division by the propensity score. Thus, it becomes unnecessary to assume that the propensity scores remain far from 1 or 0, once the reference outcome is defined as a weighted average of the potential outcomes of the two groups. Nonetheless, adopting the reference outcome proposed by \citet{neumark1988employers} has important implications for the choice of explanatory variables, as discussed in the paper.

 The remainder of the article is structured as follows. In section \ref{sec:decomposition}, we present the general framework and discuss the choice of the reference outcome. We present an equilibrium reference outcome based on \citet{neumark1988employers}. Section \ref{sec:identification} is devoted to the identification strategy. The appropriate estimators are derived and it is shown that those based on the equilibrium reference outcome do not require the common support condition. Section \ref{sec:empirical} presents the empirical strategy, distinguishing the standard parametric approach and the proposed non-parametric approach. Section \ref{sec:application} shows two empirical illustrations.  Finally, section \ref{sec:conclusion} concludes.

\section{The decomposition problem}
\label{sec:decomposition}

\subsection{General Framework}

An individual may be either in group 1 or in group 0. We suppose groups to be two countries, two time periods or two subpopulations (e.g, females and males). Let $D$ be a random variable with support $\mathcal{X}_D=\{0,1\}$ denoting the group of an individual. We assume that the mean outcome is higher in group 1 than in group 0, so that group 1 is considered the advantaged group and group 0 the disadvantaged group. This is not an assumption, but it will be convenient for interpreting the models. Indeed, the labels of the two groups can always be interchanged to check it. One can express the problem in terms of potential outcomes, following the literature on treatment effects \citep{rubin1974estimating}.

An individual in group 1 gets an outcome $Y(1)$ that is observed. In contrast, the outcome $Y(0)$ he would have received if he were in group 0 is not observed. For example, if the individual is a female, we only observe her wage as a female. $Y^\text{obs}$ is a random variable representing the outcome observed depending on the value of $D$ and is supposed to be continuous:\footnote{It is possible to extend the decomposition when $Y^\text{obs}$ is a limited dependent variable. In the case of a binary variable, one can assume an underlying probit model for example, and decompose the probability, as discussed in \citet{FoLeFi:11}, section 3.5.}
\begin{align}
    Y^\text{obs} = Y(1)D + Y(0)(1-D)
    \label{eq:Yobs}
\end{align}

Moreover, we assume that, in each group, the outcome may be explained by some individual characteristics $X \in \mathcal{X}$. They may be unevenly distributed between both categories. For example, men and women may not have the same average level of education. We want to decompose the observed difference in means between a part explained and a part not explained by the characteristics.


In general, the objective is to compare the outcome that individuals in the observed sample receive with an alternative outcome $Y(r)$ they could have received. In the simplest case, it can be the outcome received in the other group. Let $\Delta^\text{obs}$ be the expected observed difference in outcomes:
\begin{align}
\Delta^\text{obs}:&= \mathbb{E}[Y^\text{obs} \mid D=1] - \mathbb{E}[Y^\text{obs} \mid D=0] \\
&=\mathbb{E}[Y(1) \mid D=1] - \mathbb{E}[Y(0) \mid D=0] \nonumber
\end{align}

\smallskip
\noindent By adding and subtracting $\mathbb{E}\big[Y(r)  \mid  D= 0 \big]$ and $\mathbb{E}\big[Y(r) \mid D= 1 \big]$, we obtain

\begin{align}
\Delta^\text{obs}
    &= \underbrace{\mathbb{E}[Y(r) \mid D=1] - \mathbb{E}[Y(r) \mid D=0]}_{\Delta^r_X} \nonumber \\
    &+ \underbrace{\mathbb{E}[Y(1) \mid D=1] - \mathbb{E}[Y(r) \mid D=1]}_{\Delta^{r,1}_S} \nonumber \\
    &- \big( \underbrace{\mathbb{E}[Y(0) \mid D=0] - \mathbb{E}[Y(r) \mid D=0]}_{\Delta^{r,0}_S} \big) \label{decompo_refr}  
\end{align}
Thus, a difference in means between two groups can be decomposed in three parts which we denote by:
\begin{itemize}
    \item $\Delta^r_X$ that can be explained by a different composition in terms of characteristics between both groups. For example, one group may have a higher average level of education, leading to a higher average wage.
    \item $\Delta^{r,1}_S$, the unexplained advantage of group 1. 
    \item $-\Delta^{r,0}_S$, the unexplained disadvantage of group 0. 
\end{itemize}
The decomposition of the observed difference in outcomes is sensitive to the choice of the reference outcome.

\subsection{Choice of the reference outcome} 

\subsubsection{Using the advantaged or disadvantaged group's potential outcome as a reference} 

Thanks to the potential outcomes framework, we can bridge the gap between the inequality literature and the literature on treatment effects. When we choose as the reference outcome the potential outcome of one of the two groups, the unexplained component in decomposition analyses can formally resemble an ATT or an ATU. However, these analogies are purely notional. Binary variables such as gender or country of residence differ fundamentally from treatment assignments: they are not manipulable in an experimental sense. There is no experimental framework in which such variables would be randomly assigned across individuals. Drawing a parallel between a group dummy and a treatment variable in observational studies is also problematic. Unlike the baseline or pre-treatment covariates typically used to ensure comparability in treatment effect estimation, the covariates used for the decomposition are not necessarily determined before the group variable.

Taking $Y(r):=Y(0)$ implies that, in a world where individuals from both groups with the same characteristics receive identical outcomes, everyone obtains the potential outcome corresponding to the disadvantaged group $Y(0)$. In this case, $\Delta_S^{0,0}=0$. This leads to the decomposition:

\begin{align}
    \Delta^\text{obs}&= \underbrace{\mathbb{E}[Y(0) \mid D=1] - \mathbb{E}[Y(0) \mid D=0]}_{\Delta^0_X} 
    +\underbrace{\mathbb{E}[Y(1) \mid D=1] - \mathbb{E}[Y(0) \mid D=1]}_{\Delta^{0,1}_S\text{ (ATT)}} \label{decompo_ref0}
\end{align}
Note that this is exactly the decomposition derived in \citet{angrist2009mostly}, specifically in equation (3.2.1) of their book, in terms of Average Treatment effects on the Treated, ATT$:=\mathbb{E}[Y(1)-Y(0) \mid D=1]$, and selection bias.


Alternatively, one could consider that when people face no unexplained observed difference in means on the market, they all receive the outcome of the advantaged group, thus $Y(r) := Y(1)$ and $\Delta^{1,1}_S=0$. This leads to the following decomposition. 

\begin{align}
    \Delta^\text{obs}&= \underbrace{\mathbb{E}[Y(1) \mid D=1] - \mathbb{E}[Y(1) \mid D=0]}_{\Delta^{1}_X} 
    - \big( \underbrace{\mathbb{E}[Y(0) \mid D=0] - \mathbb{E}[Y(1) \mid D=0]}_{\Delta^{1,0}_S \text{ (ATU)}} )  \label{decompo_ref1}
\end{align}
Here, the unexplained part of the difference in means is analogous to an Average Treatment effect on the Untreated, ATU$:=\mathbb{E}[Y(1)-Y(0) \mid D=0]$.


The choice of the reference outcome is arbitrary, and the decomposition can be viewed as arbitrary as well. The estimations of the unexplained part $\Delta_S^{0,1}$ (ATT) and $\Delta_S^{1,0}$ (ATU) may differ significantly in practice. There is no general guidance on this choice \citep{FoLeFi:11}, and both estimations are often presented in empirical studies.

\subsubsection{The equilibrium outcome}
\label{sec:equilibrium}

Building on the taste-based discrimination framework of \citet{becker1957economics} and \citet{arrow1972some}, \citet{neumark1988employers} shows that the Kitagawa-Oaxaca-Blinder decomposition can be derived from a theoretical model of employers' discriminatory tastes. Setting $Y(r)=Y(1)$ reflects the idea of ``pure'' discrimination against women, while setting $Y(r)=Y(0)$ reflects the idea of ``pure'' nepotism toward men, that is, a willingness to hire men even at relatively higher wages. 
He also proposes a reference outcome where employers can be both discriminatory against females and nepotistic toward males. 
For an individual working in a given type of labor\footnote{For \citet{neumark1988employers}, ``types of labor'' are job categories, illustrated by the distinction between skilled versus unskilled jobs. Alternative definitions could be based on other criteria, such as employment sectors.} $l$,  the reference outcome is defined as the weighted average: 
\begin{equation}
Y_\text{Neumark}(2)=w_l\, Y(1) + (1-w_l)\, Y(0)
\label{eq:y_neumark}
\end{equation}
where $w_l$ and $(1-w_l)$ are the proportions of men and women working in labor type $l$. Here, the non-discriminatory wage is sensitive to the gender composition of each type of labor. This avoids hypotheses in employers' tastes such that they require more profits to compensate for hiring females at males wage, or they are not willing to accept lower profits to hire males at females wage. With a finite number of types of labor ($l=1, \dots, L$),  much smaller than the number of observations ($L \ll N$), it is shown that the estimation requires only the estimation of the log wage  regression for the full sample. 
Therefore, the implementation of this method is restricted to the use of  only a few categorical variables, in order to have different types of labor with enough observations in each of them.

In the following, we extend the reference outcome in ($\ref{eq:y_neumark}$) to the case of a set of characteristics $X$, which may consist of many variables, either categorical or continuous. For an individual with a given set of characteristics $X$, 
the equilibrium reference outcome is defined as the weighted average: 
\begin{equation}
Y(2)=w_X\, Y(1) + (1-w_X)\, Y(0)
\label{eq:y_newref}
\end{equation}
where the weights are defined by the proportion of individuals who share the same set of characteristics in each group, that is,
\begin{equation}
w_X=P(D=1\mid X)
\label{eq:y_newref1}
\end{equation}
In this more general setting, $Y(2)$ is nonlinear and the estimation of the equilibrium outcome requires non-parametric or machine learning methods (see the following subsection and section \ref{sec:empirical}).



The equilibrium reference outcome aligns with the common notion of equal redistribution in economic inequality, which advocates for dividing a cake equally among a set of individuals. Indeed, when we consider several individuals who share the same set of characteristics $X=x$, the average equilibrium reference outcome is a weighted average of the outcome of individuals in each group. It corresponds to the total amount of outcome redistributed equally among individuals who share the same set of characteristics.

Choosing $Y(r)=Y(2)$, the decomposition in equation~(\ref{decompo_refr}) becomes:
\begin{align}
\Delta^\text{obs}
    &= \underbrace{\mathbb{E}[Y(2) \mid D=1] - \mathbb{E}[Y(2) \mid D=0]}_{\Delta^2_X} \nonumber \\
    &+ \underbrace{\mathbb{E}[Y(1) \mid D=1] - \mathbb{E}[Y(2) \mid D=1]}_{\Delta^{2,1}_S} \nonumber \\
    &- \big( \underbrace{\mathbb{E}[Y(0) \mid D=0] - \mathbb{E}[Y(2) \mid D=0]}_{\Delta^{2,0}_S} \big)   
\end{align}
Thus, the decomposition consists of three parts. Neither the advantage $\Delta_S^{r,1}$ nor the disadvantage $-\Delta_S^{r,0}$ disappear from the unexplained part when $r=2$ is chosen, unlike in the cases $r\in\{0,1\}$. The analogy between the unexplained part ($\Delta_S^{r,1}-\Delta_S^{r,0}$) and an object such as the ATT or ATU is no longer relevant.


\subsubsection{Methodological choices to deal with a lack of common support}
\label{sec:discussion}

By choosing $r \in \{0,1\}$, there is always a counterfactual quantity, never observed, to be estimated: either $\mathbb{E}\big[Y(0) \mid D=1\big]$ in the case $r=0$, or $\mathbb{E}\big[Y(1) \mid D=0\big]$ in the case $r=1$. If we take the example of the case $r=0$, we would like to determine what individuals in the group $D=1$ would receive if they were paid according to the model $Y(0)$. A common support problem arises when, for certain combinations of covariates $X$ observed in group $D=1$, there are no individuals in group $D=0$ sharing these characteristics $X$. In other words, $\mathbb{P}\big(D=0\mid X)=0$. In the absence of alter egos in the reference group, methodological choices are needed to estimate the counterfactual.


The first solution that can be proposed is the most standard one in classical decompositions: extrapolation. If several individuals in group $D=1$ are old, which is rarely or never observed in group $D=0$, then $Y(0)$ will be predicted outside the support of $X \mid D=0$ on the basis of a linear relationship between the age and the observed outcome in group $D=0$. Hence, the actual reference outcome would be: 
\begin{equation}
\text{Extrapolation:} \quad Y(r) = 
\begin{cases}
Y(0) & \text{if \,\,} \mathbb{P}(D=0 \mid X)> 0 \\
\tilde Y(0) & \text{if \,\,} \mathbb{P}(D=0 \mid X) =0
\end{cases}
\end{equation}
$\tilde Y(0)$ corresponds to a counterfactual for which there is no observed alter egos in group $D=0$ since $\mathbb{P}(D=0\mid X)=0$. Thus, it cannot be estimated on actual observations, in contrary of $Y(0)$ that is observed for some individuals in group $D=0$. The only way to make a guess about this wage would be to extrapolate $\tilde Y(0)$ based on an estimation of $Y(0)$. The extrapolation $\tilde Y(0)$ requires additional assumptions, for instance the use of a parametric regression being stable outside the support of the sample used for the estimation. In a non-parametric regression framework, this approach is more problematic, as extrapolation outside the support is generally infeasible.

An alternative approach is to remove observations without (enough) pairs in the reference group. In practice, performing such trimming involves considering the following reference outcome:
\begin{equation}
\text{Trimming:} \quad Y(r)= 
\begin{cases}
Y(0) & \text{if \,\,} \mathbb{P}(D=0 \mid X)> \alpha \\
Y(1) & \text{if \,\,} \mathbb{P}(D=0 \mid X) \leq \alpha
\end{cases}
\label{eq:trimming}
\end{equation}
where $\alpha$ is small and arbitrarily chosen.  In the treatment effect literature, trimming is often performed for $\alpha=0.01$ or $\alpha=0.05$. 
However, there is no consensus on how to select $\alpha$. Putting (\ref{eq:trimming}) in (\ref{decompo_ref0}) leads to replace $Y(0)$ by $Y(1)$ in the right component $\Delta_S^{0,1}$ for the men without (enough) alter egos only. In expectation, this amounts to removing those observations. Thus, trimming can be viewed as a specific extrapolation, $\tilde Y(0)=Y(1)$, where no discrimination is assumed outside the support of the reference group.

In this paper, we consider setting $r=2$, that is, we take the equilibrium outcome as the reference, as an alternative to extrapolation and trimming. An appealing feature of the equilibrium reference wage in (\ref{eq:y_newref})-(\ref{eq:y_newref1}) is that it is sensitive to the gender composition of each profile. Thus, the reference wage for a profile that is predominantly female will be closer to the current wage for women, and vice versa. 
The equilibrium reference outcome $Y(2)$ is smoothly designed and continuous. In contrast, the trimming reference outcome is discontinuous and changes abruptly from $Y(0)$ to $Y(1)$ at a given threshold defined by $\alpha$. In practice, the results are quite sensitive to the choice of $\alpha$ \citep{strittmatter2021gender}. A nice feature of the equilibrium reference outcome is that such arbitrary preliminary choice is not required.

Figure \ref{fig:reference} illustrates the impact of the choice of the reference outcome on the difference in means decomposition with a lack of common support. Two sets of outcome values are drawn from  linear regression models, for the advantaged group 1 and disadvantaged group 0, and the conditional probability model is a logit model.\footnote{$Y(1) = g(1,X) + \epsilon_1$ where $g(1,X)=0.3 + 0.42 \ X$ and $\epsilon_1 \sim \mathcal{N}(0,0.01)$ and $Y(0) = g(0,X) + \epsilon_0$ where $g(0,X)=0.2 + 0.2 \ X$ and $\epsilon_0 \sim \mathcal{N}(0,0.015)$ and $P(D=1|X)=1/(1+\exp(4-8 \ X))$}
This design illustrates the presence of very few pairs for high and low values of $X$.
The unexplained part of the decomposition in (\ref{decompo_refr}) is based on $\mathbbm{E}[Y(r)|D=d]$, for $d\in (0,1)$. The latter  is usually obtained from the expectation of the counterfactual $\mathbbm{E}[Y(r)|X]$, which is drawn with a black line in the figures (see next section for a discussion on the identification of the counterfactual).

    \begin{figure}[t!]
    \caption{The effect of reference outcome selection on the unexplained component under limited common support}
    \label{fig:reference}
    \centering
    \begin{threeparttable}
    \includegraphics[width=\textwidth]{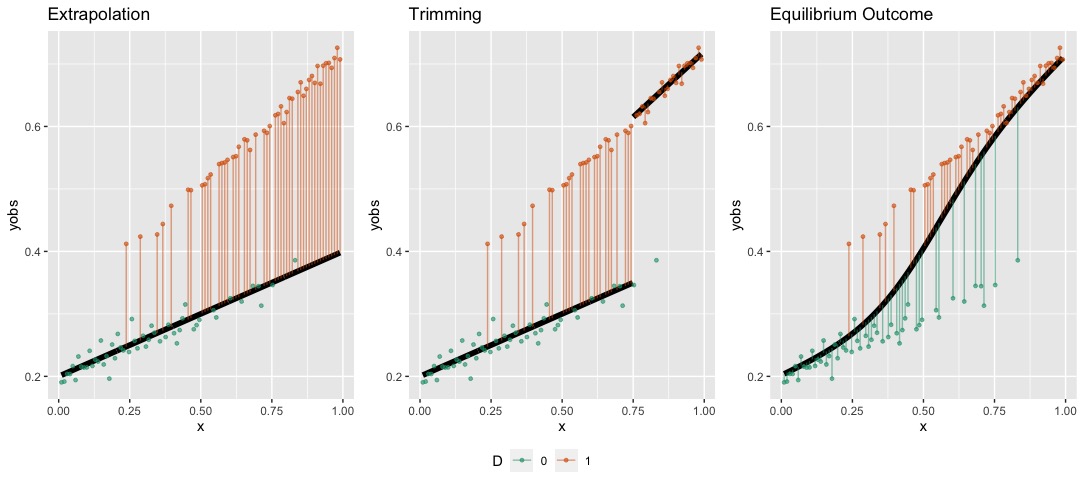}
        \begin{tablenotes}
            \footnotesize
            \item \textbf{Note}: On each graph, the reference outcome chosen is shown in black. This makes it possible to observe how each approach (extrapolation, trimming, and equilibrium outcome) deals with the limited common support problem occurring for large $x$. (Left) In the case of extrapolation, although there are few observations in the group $0$ for $x \geq 0.75$, the linear model is extended. (Middle) For trimming, it is considered that there is not enough information from $x \geq 0.75$, and the distances between the two models are ignored for these individuals in group~1. (Right) Finally, in the case of the equilibrium outcome, all sample points (i.e., pooled sample observations) are compared with an average of the two potential outcomes models, weighted by the propensity score.
   \end{tablenotes}
    \end{threeparttable}
\end{figure}

Figure \ref{fig:reference} (left) shows the counterfactual with the outcome of the disadvantaged group taken as reference outcome, $\mathbbm{E}[Y(0)|X]$, in the case of extrapolation (black line). The counterfactual is extrapolated out of the support of the outcome values of group 0  in a continuous and smooth way, assuming that it is a linear function, $\mathbbm{E}[Y(0)|X]=\beta_0+\beta_1 X$.   The unexplained part of the difference in means is given by the average of the deviations of group 1 observations from the counterfactual (orange vertical lines). It corresponds to the standard Kitagawa-Oaxaca-Blinder approach with $r=0$\footnote{Note that, in the classic approach, it is common to re-estimate both models using linear regression, as discussed in section \ref{subsubsec:reg}.}.

Figure \ref{fig:reference} (middle) shows the counterfactual with the outcome of the disadvantaged group taken as the reference outcome, in the case of trimming (black lines). Below the threshold $X=0.75$, 
the counterfactual is given by the conditional outcome of the disadvantaged group $\mathbbm{E}[Y(0)|X]$. Above the threshold, it is given by the conditional outcome of the advantaged group $\mathbbm{E}[Y(1)|X]$. This counterfactual is defined by a discontinuous line, with a jump at the selected threshold. The unexplained part of the difference in means is given by the average of the  deviations of group 1 observations from the counterfactual (orange vertical lines). It is clear in this example that, by trimming observations with the largest deviations, the unexplained part becomes smaller than that obtained by extrapolation.

Figure \ref{fig:reference} (right) shows the counterfactual  with the equilibrium reference outcome, $\mathbbm{E}[Y(2)|X]$
(black curve). It is a continuous and smooth function, moving from the outcome values of the disadvantaged group for low values of $X$ to the outcome values of the advantaged group for high values of $X$. This counterfactual takes into account the balance between the two groups, for given $X$. The unexplained part of the difference in means is given by the average of the deviations of group 1 observations from the counterfactual (orange vertical lines) minus the average of the  deviations of group 0 observations from the counterfactual (green vertical lines). It is worth noting that the counterfactual is non-linear, even if $Y(0)$ and $Y(1)$ are drawn from two linear regression models. This is because $X$ is not categorical and does not define a limited number of profiles. Therefore, using the equilibrium reference outcome $(\ref{eq:y_newref})$ requires non-parametric or machine learning estimation methods to estimate the counterfactual.

Finally, interpreting the equilibrium reference wage in (\ref{eq:y_newref}) as ``non-discriminatory" in gender gap analysis requires some caution. Indeed, the effect of discrimination is to redistribute wages within each profile, defined by a set of characteristics $X$. This implies that:
\begin{itemize}
\item With only women (men) for a given profile, the equilibrium outcome is that of those women (men), and there is no supposed discrimination (nepotism) toward them. The tendency for women to be concentrated in lower-paid occupations, or the fact that when an occupation gets more feminized, the wages may decrease, are not captured as part of the discrimination.
\item The estimate of discrimination is sensitive to differences in the distribution of characteristics across men and women. However, if the characteristics are unevenly distributed among groups due to other mechanisms, such as unequal access to education, the intermediate wage also hides this discrimination. 
\item The equilibrium wage does not take into account selection bias. Individuals with specific characteristics can have unequal chances of being
in the sample in both groups - for example, men may participate more or less in the labor market than women, depending on the expected returns of their characteristics. 
\end{itemize}



Each decomposition, either (\ref{decompo_refr}), (\ref{decompo_ref0}), or (\ref{decompo_ref1}), requires to estimate observable and counterfactual quantities. The next session presents identification strategies for both.

\section{Identification strategy}
\label{sec:identification}


We will focus here on the unexplained part of the observed difference in means defined in (\ref{decompo_refr}):
\begin{equation}
    \delta_r = \Delta_S^{r,1} - \Delta_S^{r,0}
    \label{deltar}
\end{equation}
The targeted parameters  are the unexplained parts $\Delta^{r,1}_S$ and $\Delta^{r,0}_S$. Hence, we aim at estimating  $\Delta^{r,d}_S$ with $d \in \{0,1\}$, which can  be written as a difference between an observable mean and a counterfactual mean:

\begin{align}
\Delta^{r,d}_S=\underbrace{\mathbb{E}\big[ Y(d) \mid D=d \big]}_{\text{Observable}_{d}} - \underbrace{\mathbb{E} \big[ Y(r) \mid D=d \big]}_{\text{Counterfactual}_{r,d}} \label{counterfactualmean}
\end{align}
When $r=d$, it is equal to zero. Thus, we  let $r \in \{0,1,2\} \setminus d$ herafter.


The observable part can be easily estimated by a sample mean. Indeed, since $Y(d)$ is the outcome always observed in group $D=d$, from the law of total probability we have:
\begin{align}
    \text{Observable}_{d} :&= \mathbb{E}\big[Y(d) \mid D= d\big] \label{observablemean} 
    = \mathbb{E}\big[Y^\text{obs} \mid D= d\big] 
    = \mathbb{E}\Bigg[Y^\text{obs} \frac{\mathbbm{1}\{D= d\}}{\mathbb{P}(D=d)}\Bigg]
\end{align}


The central problem in (\ref{counterfactualmean}) is the identification of the counterfactual mean, which is unobserved. The different strategies we will consider are:
\begin{itemize}
    \item Outcome regression:  The reference outcome is obtained by a regression model, conditional on a set of covariates, $\mathbb{E}[Y(r)|X]$. 
    \item Inverse Probability Weighting (IPW): The reference outcome is obtained by reweighting through the conditional assignment model to one of the two groups $\mathbb{P}\big(D=d \mid X)$.
    \item Augmented Inverse Probability Weighting (AIPW):  combining both strategies, to get an estimator which is consistent if at least one of the models is well-specified. 
\end{itemize}
The outcome regression approach is well-known in the literature on inequality; it is used in the standard Kitagawa-Oaxaca-Blinder decomposition method. The IPW and AIPW are often used in the treatment effects literature.

These strategies are detailed below and require more details on either the outcome or the probability model.
We will see that the identification of the counterfactual in (\ref{counterfactualmean}) requires the following assumption:
\begin{assumption} \label{as:ignorability}
{\textbf{[ignorability]}} For $r \in \{0,1 \}$,
\begin{align}
       Y(r) \perp\!\!\!\perp    D \mid X \  \label{assumption:ignorability}
\end{align}
where $\perp\!\!\!\perp $ denotes statistical independence \citep{rosenbaum_central_1983}. It is also known as the {\em conditional independence} assumption or {\em unconfoundedness} 
\citep{imbens2004nonparametric}.
\end{assumption}
This condition states that the potential outcome is independent of group membership $D$, conditional on covariates $X$. It implies mean independence:
\begin{align}
    \mathbb{E}[Y(r)|X] = \mathbb{E}[Y(r)|X,D=0] = \mathbb{E}[Y(r)|X,D=1] 
\label{eq:igno_cmean}
\end{align} 
Therefore, within the subpopulation of individuals with the same covariates, the difference in the distributions of the observed outcomes between individuals from the two groups fairly represents the difference in means in this subpopulation \citep{imbens_causal_2015}. 
This is because within this subpopulation, the individuals in the two groups are both random samples from that subpopulation.

\subsection{Outcome regression}

\subsubsection{Standard reference outcomes ($r \in \{0, 1\}$)}

With the law of iterated expectations and assumption \ref{as:ignorability} [ignorability] for $Y(r)$, the counterfactual in (\ref{counterfactualmean}) can be related to the observed outcome:
\begin{align}
\text{Counterfactual}_{r,d} &= \mathbb{E} \big[ Y(r) \mid D=d \big] \\
&= \mathbb{E}  \big[ \mathbb{E} \big( Y(r) \mid X ) \mid  D=d  \big] \\
&= \mathbb{E}  \big[ \mathbb{E} \big( Y(r) \mid X, D=r \big) \mid D=d \big] \\
&= \mathbb{E}  \big[ \mathbb{E} \big( Y^\text{obs} \mid X, D=r \big) \mid D=d \big] \\
&= \mathbb{E} \Bigg[ \mathbb{E} \big( Y^\text{obs} \mid X, D=r \big) \frac{\mathbbm{1}\{D=d\}}{\mathbb{P}(D=d)} \Bigg]
\label{counterfactual_regression_id}
\end{align}
Let us define the conditional expectation by a regression function:
\begin{align}
    \mathbb{E} \big[ Y^\text{obs} \mid X, D=r \big] = g(r,X)
    \label{eq:grX}
\end{align}
From (\ref{counterfactualmean}), (\ref{observablemean}), (\ref{counterfactual_regression_id}) and (\ref{eq:grX}), the unexplained advantage of group 1 ($\Delta_S^{1,0}$)  and the unexplained disadvantage of group 0 ($-\Delta_S^{0,1}$) can be identified with:
\begin{align}
    \Delta_S^{r,d} = \mathbb{E} \Bigg[\Big(Y^\text{obs} - g(r,X)\Big)\frac{\mathbbm{1}\{D=d\}}{\mathbb{P}(D=d)}\Bigg] 
    \label{eq-Delta-reg-standard}, \qquad r \not= d
\end{align}

\subsubsection{Equilibrium reference outcome ($r=2$)}

We first show that the ignorability assumption allows us to link the potential equilibrium outcome to the observed outcome, while $Y(2)$ is never observed directly in the  $D=0$ or $D=1$ group.
\begin{proposition}
\label{prop:neumark}
Under assumption \ref{as:ignorability} [ignorability] for $Y(0)$ and $Y(1)$, we have
\begin{align}
\mathbb{E} [ Y(2)|X] = \mathbb{E} [ Y^\text{obs}|X]
\label{proposition:Y2|X}
\end{align}
\end{proposition}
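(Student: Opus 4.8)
The plan is to unfold the definition of $Y(2)$ and reduce everything to conditional expectations that ignorability lets us equate with observed quantities. First I would use the definition in (\ref{eq:y_newref})--(\ref{eq:y_newref1}) together with the fact that $w_X = \mathbb{P}(D=1\mid X)$ is a function of $X$, so that conditioning on $X$ treats it as a constant and linearity of conditional expectation yields
\[
\mathbb{E}[Y(2)\mid X] = w_X\,\mathbb{E}[Y(1)\mid X] + (1-w_X)\,\mathbb{E}[Y(0)\mid X].
\]

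Next I would invoke the mean-independence consequence of Assumption \ref{as:ignorability}, stated in (\ref{eq:igno_cmean}) and applied separately to $Y(1)$ and $Y(0)$, to replace $\mathbb{E}[Y(1)\mid X]$ by $\mathbb{E}[Y(1)\mid X, D=1]$ and $\mathbb{E}[Y(0)\mid X]$ by $\mathbb{E}[Y(0)\mid X, D=0]$. Since (\ref{eq:Yobs}) gives $Y^\text{obs}=Y(1)$ on the event $\{D=1\}$ and $Y^\text{obs}=Y(0)$ on $\{D=0\}$, each of these equals the corresponding conditional expectation of the observed outcome, namely $\mathbb{E}[Y(d)\mid X, D=d] = \mathbb{E}[Y^\text{obs}\mid X, D=d]$ for $d\in\{0,1\}$.

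Finally, recognising the weights as the conditional group probabilities, $w_X = \mathbb{P}(D=1\mid X)$ and $1-w_X = \mathbb{P}(D=0\mid X)$, I would obtain
\[
\mathbb{E}[Y(2)\mid X] = \mathbb{P}(D=1\mid X)\,\mathbb{E}[Y^\text{obs}\mid X, D=1] + \mathbb{P}(D=0\mid X)\,\mathbb{E}[Y^\text{obs}\mid X, D=0],
\]
which is exactly the law of total expectation for $Y^\text{obs}$ obtained by partitioning on $D$ within the slice $\{X=x\}$, and hence equals $\mathbb{E}[Y^\text{obs}\mid X]$, as claimed.

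I do not anticipate a real obstacle: the argument is a direct combination of the $X$-measurability of the weights, the mean-independence form of ignorability, and the partitioning of a conditional expectation over $D$. The only points requiring care are to apply ignorability to each potential outcome separately---which is precisely why the assumption is imposed for both $Y(0)$ and $Y(1)$---and to check that $w_X$ and $1-w_X$ coincide with $\mathbb{P}(D=1\mid X)$ and $\mathbb{P}(D=0\mid X)$, so that the final step is an exact application of the law of total expectation rather than an approximation.
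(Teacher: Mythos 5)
Your proposal is correct and uses exactly the same ingredients as the paper's proof---the decomposition of $Y^{\text{obs}}$ over $D$, ignorability applied separately to $Y(0)$ and $Y(1)$, and the $X$-measurability of the weights $w_X=\mathbb{P}(D=1\mid X)$---merely run in the reverse direction (starting from $\mathbb{E}[Y(2)\mid X]$ rather than from $\mathbb{E}[Y^{\text{obs}}\mid X]$). This is essentially the same argument, and it is complete.
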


\begin{proof}
From (\ref{eq:Yobs}), we have
\begin{align*}
    \mathbb{E}\big[Y^\text{obs} \mid X \big] &= \mathbb{E}\big[D Y(1) + (1-D) Y(0) \mid X \big] \\
    &=\mathbb{E}\big[Y(1) \mid X, D=1 \big] \mathbb{P}\big(D=1 \mid X \big) 
    + \mathbb{E}\big[Y(0) \mid X, D=0 \big] \mathbb{P}\big(D=0 \mid X \big) 
\intertext{Then, using assumption \ref{as:ignorability} [ignorability] for $Y(1)$ and $Y(0)$:}
    \mathbb{E}\big[Y^\text{obs} \mid X \big] &=\mathbb{E}\big[Y(1) \mid X \big] \mathbb{P}\big(D=1 \mid X \big) 
    + \mathbb{E}\big[Y(0) \mid X \big] \mathbb{P}\big(D=0 \mid X \big) \\
    &=\mathbb{E}\big[Y(1) \mathbb{P}\big(D=1 \mid X \big) + Y(0) \mathbb{P}\big(D=0 \mid X \big) \mid X \big] \\
    &= \mathbb{E}\big[Y(2) \mid X \big]
\end{align*}
\end{proof}


%
%
%
With (\ref{proposition:Y2|X}), ignorability and the law of iterated expectations, the counterfactual in (\ref{counterfactualmean}) can be related to the observed outcome as follows:
\begin{align}
\text{Counterfactual}_{2,d} &= \mathbb{E} \big[ Y(2) \mid D=d \big] \\
&= \mathbb{E}  \big[ \mathbb{E} [ Y(2) \mid X, D=d ] \mid  D=d  \big] \\
&= \mathbb{E}  \big[ \mathbb{E} [ Y(2) \mid X ] \mid  D=d  \big] \\
&= \mathbb{E}  \big[ \mathbb{E} [ Y^\text{obs} \mid X ] \mid D=d \big] \label{counterfactual_regression2iintermediate}
 \\
&= \mathbb{E} \Bigg[ \mathbb{E} [Y^\text{obs} \mid X ]  \frac{\mathbbm{1}\{D=d\}}{\mathbb{P}(D=d)}\Bigg]
\label{counterfactual_regression2}
\end{align}
Let us define the conditional expectation by a regression function:
\begin{align}
    \mathbb{E} \big[ Y^\text{obs} \mid X \big] = g(2,X)
    \label{eq:g2X}
\end{align}
From (\ref{counterfactualmean}), (\ref{observablemean}), (\ref{counterfactual_regression2}) and (\ref{eq:g2X}), the unexplained advantage of group 1 ($\Delta_S^{2,1}$ ) and the unexplained disadvantage of group 0 ($-\Delta_S^{2,0}$) can be identified with:
\begin{align}
    \Delta_S^{2,d} = \mathbb{E} \Bigg[\Big(Y^\text{obs} - g(2,X)\Big)\frac{\mathbbm{1}\{D=d\}}{\mathbb{P}(D=d)}\Bigg] 
    \label{eq-Delta-reg-standard-r2}
\end{align}

\subsection{Inverse Probability Weighting}

The main idea of the Inverse Probability Weighting (IPW) is that an expectation computed from a subpopulation can be corrected to reflect the expectation for the whole population. $\text{Counterfactual}_{r,d}$ in (\ref{counterfactualmean}) is a conditional expectation. 
Therefore, IPW involves finding weights $\pi(X)$ such that:
\begin{align}
    \text{Counterfactual}_{r,d} = \mathbb{E}[Y(r) \mid D=d] = \mathbb{E}[Y(r) \pi(X) \mid D=r] \label{counterfactualasweightedmean}
\end{align}

\subsubsection{Standard reference outcome  ($r \in \{0, 1\}$)}
\label{subsubsection:standardrefoutcomeIPW}

When the outcome of the advantaged or disadvantaged group is taken as the reference outcome, an additional condition is required.

\begin{assumption} \label{as:support}
{\textbf{[common support]}}: for $r \in \{0,1\}$
\begin{align}
   \mathbb{P}\big(D=r\mid X \big) > 0 \label{assumption:commonsupport}
\end{align} 
\end{assumption}
For example, to estimate $\mathbb{E}\big[Y(1)-Y(0) \mid D=1\big]$, this requires: 
$\mathbb{P}\big(D=0\mid X \big)> 0$.
The consequence of this assumption is that there is no case where a specific set of individual characteristics forces the individual to be found in the group $D=1$ only. In other words, there are alter-egos in the group $D=0$ for every individual in group $D=1$.\footnote{In section \ref{subsubsec:reg}, we discuss in detail the links between this assumption in the context of inverse probability weighting, and the assumption of invertibility of $X'X$ generally made in the case of extrapolation.} 
In practice, IPW estimators require one to drop observations with extreme propensity scores, so that the resulting estimation does not depart from the truth due to a division by zero. This causes a lack of observations that could be avoided in the standard Kitagawa-Oaxaca-Blinder decomposition.

Under Assumptions \ref{as:ignorability} [ignorability] and  \ref{as:support} [common support], using Bayes' rule and the law of iterated expectations, one can show that the weights needed in (\ref{counterfactualasweightedmean}) for $(r, d) \in \{0,1\}^2$ are:\footnote{See for example \citet{kline2011oaxaca} for a proof.}
\begin{align}
    \pi(X)= \frac{\mathbb{P}(D=r)}{\mathbb{P}(D=r\mid X)} \frac{\mathbb{P}(D=d \mid X)}{\mathbb{P}(D=d)} \label{weightsIPW}
\end{align}
Using (\ref{weightsIPW}) in (\ref{counterfactualasweightedmean}) and the law of total probability, 
this leads to the following counterfactual: 
\begin{align}
    \text{Counterfactual}_{r,d} &= \mathbb{E}\Bigg[\frac{Y(r) \mathbbm{1}(D=r)}{\mathbb{P}(D=r \mid X)} \frac{\mathbb{P}(D=d \mid X)}{\mathbb{P}(D=d)} \Bigg] \\
    &= \mathbb{E}\Bigg[\frac{Y^\text{obs} \mathbbm{1}(D=r)}{\mathbb{P}(D=r \mid X)} \frac{\mathbb{P}(D=d \mid X)}{\mathbb{P}(D=d)} \Bigg] \label{counterfactualIPW}
\end{align}
Let us define the conditional probabilities by a regression model: 
\begin{align}
    \mathbb{P}\big(D=d \mid X\big) = p(d,X) 
    \label{eq:propscore}
\end{align}
From (\ref{counterfactualmean}), (\ref{observablemean}), (\ref{counterfactualIPW}) and (\ref{eq:propscore}), the unexplained advantage of group 1 ($\Delta_S^{0,1}$) and the unexplained disadvantage of group 0 ($-\Delta_S^{1,0}$) can be identified:
\begin{align}
\Delta_S^{r,d} = \mathbb{E} \Bigg[ \frac{\mathbbm{1}\{D=d\}}{\mathbb{P}(D=d)}Y^\text{obs} - \frac{\mathbbm{1}\{D=r\}\, p(d,X)}{\mathbb{P}(D=d) \, p(r,X)} Y^\text{obs} \Bigg]
\label{eq-Delta-IPW-standard}
\end{align}
It is clear from this equation that $p(r,X)$ must be different from zero. The common support condition is then required to identify the unexplained part of the mean decomposition.

\subsubsection{Equilibrium reference outcome ($r=2$)} 

Similarly, under ignorability, we can derive $\pi(X)$ such that equation (\ref{counterfactualasweightedmean}) is satisfied in the case $r=2$, building on \citet{kline2011oaxaca}:
\begin{align}
    \text{Counterfactual}_{2,d} &= \mathbb{E} \big[ Y(2) \mid D=d \big] \\
    &= \int \mathbb{E} \big[ Y(2) \mid D=d, X = x \big] dF_{X\mid D=d} (x) \\
    &= \int \mathbb{E} \big[ Y(2) \mid X = x \big] dF_{X\mid D=d} (x) \quad \text{using ignorability} \\
    &= \int \mathbb{E} \big[ Y^\text{obs} \mid X = x \big] dF_{X\mid D=d} (x) \quad \text{reusing equation (\ref{proposition:Y2|X})} \\
    &= \int \mathbb{E} \big[ Y^\text{obs} \mid X = x \big] \frac{\mathbb{P}(D=d \mid X=x)}{\mathbb{P}(D=d)} dF_{X} (x) \\ &\quad \text{by the definition of conditional distributions} \nonumber \\
    &= \mathbb{E} \Bigg[ Y^\text{obs} \frac{\mathbb{P}(D=d \mid X)}{\mathbb{P}(D=d)} \Bigg] \label{counterfactualIPW2}
\end{align}


From (\ref{counterfactualmean}), (\ref{observablemean}), (\ref{eq:propscore}) and (\ref{counterfactualIPW2}), the unexplained advantage of group 1 ($\Delta_S^{2,1}$) and the unexplained disadvantage of group 0 ($-\Delta_S^{2,0}$) can be identified:
\begin{align}
    \Delta_{S}^{2,d} = \mathbb{E}\Bigg[ \frac{\mathbbm{1}\{D=d\}}{\mathbb{P}(D=d)}Y^\text{obs} - \frac{p(d,X)}{\mathbb{P}(D=d)}Y^\text{obs} \Bigg] \label{eq-Delta-IPW-equilibrium}
\end{align}
Unlike in (\ref{eq-Delta-IPW-standard}), no restriction is required on $p(\cdot,X)$ as it does not appear at the denominator. Therefore, the assumption \ref{as:support} [commmon support] is not required to identify the unexplained part of the observed difference in outcomes.

\subsection{Augmented Inverse Probability Weighting}

A doubly-robust estimator is consistent if at least one of the models is correctly specified, either the outcome model or the probability model \citep{robins1995semiparametric}. Thus, it combines both strategies. The Augmented Inverse Probability Weighting (AIPW) estimator is the most commonly known doubly robust estimator, and typically relies on parametric regressions to estimate the outcome model and the propensity score.\footnote{In fact, \citet{kline2011oaxaca} shows that the standard Kitagawa-Oaxaca-Blinder estimator constitutes a propensity score re-weighting estimator based upon a linear model for the conditional odds of being treated. This includes assignment models with a latent log-logistic error structure and may yield negative weights. As such, it is a ``doubly robust" estimator.} It provides an estimator for $\text{Counterfactual}_{r,d}$.

\subsubsection{Standard reference outcomes ($r \in \{0, 1\}$)}

When the outcome of the advantaged or disadvantaged group is taken as reference outcome, $r \in \{0, 1\}$, providing that assumption \ref{as:support} [commmon support] holds, the estimator of the counterfactual mean is equal to:\footnote{(\ref{estimatorcounterfactual:AIPW1reg}) is equal to $\mathbb{E} \Big[  \frac{Y^\text{obs} \mathbbm{1}(D=r)}{\mathbb{P}(D=d)} \frac{p(d,X)}{p(r,X)}   + \Big( \frac{\mathbbm{1}(D=d)}{\mathbb{P}(D=d)} - \frac{p(d,X)}{p(r,X)} \frac{\mathbbm{1}(D=r)}{\mathbb{P}(D=d)} \Big) g(r,X) \Big]$ from which we have (\ref{estimatorcounterfactual:AIPW1ipw}).}
\begin{align}
        \text{Counterfactual}_{r,d} = \mathbb{E} \Bigg[ \Big( Y^\text{obs} - g(r,X)  \Big) \frac{\mathbbm{1}(D=r)}{\mathbb{P}(D=d)} \frac{p(d,X)}{p(r,X)}  +  \frac{\mathbbm{1}(D=d)}{\mathbb{P}(D=d)}  g(r,X) \Bigg] & \label{estimatorcounterfactual:AIPW1reg}\\ 
    \qquad\,\, = \mathbb{E} \Bigg[  \frac{Y^\text{obs} \mathbbm{1}(D=r)}{\mathbb{P}(D=d)} \frac{p(d,X)}{p(r,X)}   + \Bigg( \frac{\mathbbm{1}(D=d) - p(d,X)}{p(r,X)\mathbb{P}(D=d)} \Bigg) g(r,X) \Bigg] & \label{estimatorcounterfactual:AIPW1ipw} 
\end{align}
When $g(r,X)$ is correctly specified, the first term in (\ref{estimatorcounterfactual:AIPW1reg}) disappears and the counterfactual reduces to that of the outcome regression approach in (\ref{counterfactual_regression_id}).
When $p(r,X)$ is correctly specified, the second term in (\ref{estimatorcounterfactual:AIPW1ipw}) disappears  and the counterfactual reduces to that of the IPW in (\ref{counterfactualIPW}).

From (\ref{counterfactualmean}), (\ref{observablemean}) and (\ref{estimatorcounterfactual:AIPW1reg}), the unexplained advantage of groups 1 ($\Delta_S^{1,0}$) and the unexplained disadvantage of group 0 ($-\Delta_S^{0,1}$) can be identified:
\begin{align}
        \Delta_S^{r,d} 
        &= \mathbb{E} \Bigg[  \Big( Y^\text{obs} - g(r,X)\Big) \Bigg( \frac{\mathbbm{1}(D=d)}{\mathbb{P}(D=d)} - \frac{\mathbbm{1}(D=r)}{\mathbb{P}(D=d)} \frac{p(d,X)}{p(r,X)} \Bigg)   \Bigg]  \label{estimator:AIPW1} 
\end{align}
It is clear from this equation that $p(r,X)$ must be different from zero and that the common support condition is required to identify the unexplained part of the mean decomposition. 

\subsubsection{Equilibrium reference outcome ($r=2$)}

When the equilibrium outcome is taken as reference outcome, $r=2$, the estimator of the couterfactual mean is equal to:
\begin{align}
        \text{Counterfactual}_{2,d} &= \mathbb{E} \Bigg[ \Big( Y^\text{obs} - g(2,X)  \Big)  \frac{p(d,X)}{\mathbb{P}(D=d)}  +  \frac{\mathbbm{1}(D=d)}{\mathbb{P}(D=d)}  g(2,X) \Bigg]  \label{estimatorcounterfactual:AIPW2reg}\\ 
     &= \mathbb{E} \Bigg[  \frac{Y^\text{obs}p(d,X)}{\mathbb{P}(D=d)}    + \Bigg( \frac{\mathbbm{1}(D=d) - p(d,X)}{\mathbb{P}(D=d)} \Bigg) g(2,X) \Bigg]  \label{estimatorcounterfactual:AIPW2ipw} 
\end{align}
When $g(r,X)$ is correctly specified, the first term in (\ref{estimatorcounterfactual:AIPW2reg}) disappears and the counterfactual reduces to that of the outcome regression approach in (\ref{counterfactual_regression2}).
When $p(r,X)$ is correctly specified, the second term in (\ref{estimatorcounterfactual:AIPW2ipw}) disappears in average and the counterfactual reduces to that of the IPW in (\ref{counterfactualIPW2}).

From (\ref{counterfactualmean}), (\ref{observablemean}) and (\ref{estimatorcounterfactual:AIPW2reg}), the unexplained advantage of groups 1 ($\Delta_S^{2,0}$) and the unexplained disadvantage of group 0 ($-\Delta_S^{2,1}$) are identified:
\begin{align}
        \Delta_S^{2,d}  &= \mathbb{E} \Bigg[  \Big( Y^\text{obs} - g(2,X)\Big) \Bigg( \frac{\mathbbm{1}(D=d)}{\mathbb{P}(D=d)} - \frac{p(d,X)}{\mathbb{P}(D=d)} \Bigg)   \Bigg]  \label{estimator:AIPW2:2d} 
\end{align}
Unlike in (\ref{estimator:AIPW1}), no restriction is required on $p(\cdot ,X)$ as it does not appear at the denominator. Therefore, assumption \ref{as:support} [commmon support] is not required to identify the unexplained part of the observed difference in outcomes.

\section{Empirical strategy}
\label{sec:empirical}

In the previous section, the unexplained parts of each group are identified.  Estimators of the unexplained part of the observed difference  in (\ref{deltar}) can then be derived,  replacing expectations and functions $g(r,\cdot)$ and $p(d,\cdot)$ by  sample means and estimations $\hat g(r,\cdot)$ and $\hat p(d,\cdot)$. For the different reference outcomes, we obtain the estimators below (see details in Appendix \ref{sec:estimation}).
\begin{itemize}
    \item When {\bf $r=1$}, with the outcome of the advantaged group 1 taken as reference outcome, the unexplained part of observed difference in outcomes (\ref{deltar}) can be estimated with:
    \begin{align}
    \hat{\delta}_1^{\text{reg}} & =  - \frac{1}{n_0} \sum_{i=1}^n \left( Y^\text{obs}_i - \hat{g}(1, X_i) \right)(1-D_i),
    \label{estimatorcounterfactual_regression_d1}
    \\
    \hat{\delta}_1^{\text{ipw}} &= \frac{1}{n_0} \sum_{i=1}^n Y^\text{obs}_i \left[ \frac{D_i-\hat{p}(1, X_i)}{\hat{p}(1, X_i)} \right]
     \label{estimatorunexplained_ipw1}
    \\
    \hat\delta_{1}^{\text{aipw}} &= \frac{1}{n_0} \sum_{i=1}^n \Big(Y^\text{obs}_i-\hat g(1,X_i)\Big) \frac{D_i-\hat p(1,X_i)}{\hat p(1,X_i)} 
     \label{empiricalestimator:AIPW1} 
    \end{align}
    where $n_{0}$ is the number of individuals in group $0$ and $n$ is the total number of individuals.
    %
    \item When {\bf $r=0$}, with the outcome of the disadvantaged group 0 taken as reference outcome, the unexplained part of the observed difference in outcomes (\ref{deltar}) can be estimated with:
    \begin{align}
    \hat{\delta}_0^{\text{reg}} &= \frac{1}{n_1} \sum_{i=1}^n \left( Y^\text{obs}_i - \hat{g}(0, X_i) \right) D_i 
    \label{estimatorcounterfactual_regression_d0}
    \\
    \hat{\delta}_0^{\text{ipw}} &= \frac{1}{n_1} \sum_{i=1}^n Y^\text{obs}_i \left[ \frac{D_i-\hat{p}(1, X_i)}{1-\hat{p}(1, X_i)} \right] 
    \label{estimatorunexplained_ipw0}
    \\
    \hat\delta_0^{\text{aipw}} &= \frac{1}{n_1} \sum_{i=1}^n \Big(Y^\text{obs}_i-\hat g(0,X_i)\Big) \frac{D_i-\hat p(1,X_i) }{1-\hat p(1,X_i)}  
     \label{empiricalestimator:AIPW0} 
    \end{align}
    where $n_{1}$ is the number of individuals in group $1$.
%
    \item When {\bf $r=2$}, with the equilibrium outcome taken as reference outcome, the overall unexplained part of the observed difference in outcomes (\ref{deltar}) combines the unexplained parts of both the advantaged and disadvantaged groups. It can be estimated with:
    \begin{align}
    \hat{\delta}_2^{\text{reg}} &=
    \frac{1}{n_1} \sum_{i=1}^n [ Y^\text{obs}_i - \hat{g}(2, X_i) ] D_i 
    - 
    \frac{1}{n_0} \sum_{i=1}^n [ Y^\text{obs}_i - \hat{g}(2, X_i)  ](1-D_i)
    \label{estimatorcounterfactual_regression_d2}
    \\
    \hat\delta_2^\text{ipw} &= \left( \frac{1}{n_1}+ \frac{1}{n_0} \right) \sum_{i=1}^n Y^\text{obs}_i  \big[D_i - \hat p(1, X_i)  \big] 
    \label{estimatorunexplained_ipw2}
    \\
    \hat\delta_2^{\text{aipw}} &= 
     \Big(\frac{1}{n_1}+\frac{1}{n_0}\Big) \sum_{i=1}^n  \Big( Y^\text{obs}_i - \hat g(2,X_i)\Big) \Big( D_i - \hat p(1,X_i) \Big) 
\label{empiricalestimator:AIPW2} 
    \end{align}
\end{itemize}

Note that the weights from the IPW and AIPW estimators may not sum up to one in finite sample. One can rewrite the IPW and AIPW estimators as a function of weights that can be normalized to sum up to one in this case. \citet{BuDiMc:14} show that normalized IPW estimators with $r=0$ and $r=1$ perform often better than unnormalized IPW estimators in finite samples. We show normalized IPW and AIPW estimators for $r=\{0,1,2\}$ in Appendix \ref{sec:estimation}.

All these estimators require Assumption \ref{as:ignorability} [ignorability]. Moreover, Assumption \ref{as:support} [common support] is required for $r=0$ and $r=1$. Therefore, trimming observations when $\hat p(1,X_i)$ is close to 0 when $r=1$ or 1 when $r=0$ is often performed in practice. It is not required for $r=2$.

\subsection{Parametric approach}
\label{subsec:param}

Parametric regression models can be estimated efficiently under regular assumptions. However, the unbiasedness and consistency of the estimators depend on the correct specification of the model.

\subsubsection{The standard decomposition approach}
\label{subsubsec:reg}

The standard decomposition approach is  implemented with the estimators in (\ref{estimatorcounterfactual_regression_d1}), (\ref{estimatorcounterfactual_regression_d0}) and (\ref{estimatorcounterfactual_regression_d2}), where the outcome regression $g(r,\cdot)$ is estimated from a parametric linear model:

\begin{assumption} 
\label{as:linear}
{\textbf{[linear outcome]}}: Let $r \in\{0,1,2\}$, the outcome regression is defined as 
\begin{align}
Y(r) = g(r, X) + \varepsilon_r = X \beta_r + \varepsilon_r
\label{eq:linear_outcome}
\end{align}
where $\beta_r$ is a $k$-vector of unknown coefficients and $\varepsilon_r$ the error term. Since $Y(r)$ is observed for individuals in group $r$ only, the model will be estimated on the group $D=r$ when $r\in \{0,1\}$, and on the full sample when $r=2$.
\end{assumption}

\begin{assumption} 
\label{as:exogeneity}
{\textbf{[exogeneity]}}: Let $r \in\{0,1,2\}$, we have
\begin{align}
\mathbb{E}(\varepsilon_r | X) = 0 
\label{eq:exogeneity}
\end{align}
\end{assumption}
With the exogeneity assumption, (\ref{eq:linear_outcome}) is equivalent to (\ref{eq:grX}) and (\ref{eq:g2X}), and the regression function $g(r,.)$ can be estimated consistently from the linear outcome model above. The unknown coefficients $\beta_r$ are usually estimated by OLS, when $X'X$ is invertible.

In the case $r\in\{0,1\}$, under assumptions 
\ref{as:linear} [linear outcome] and \ref{as:exogeneity} [exogeneity], the unexplained part of the observed difference can be estimated with $\hat\delta_1^{\text{reg}}$ in (\ref{estimatorcounterfactual_regression_d1}) or $\hat\delta_0^{\text{reg}}$ in (\ref{estimatorcounterfactual_regression_d0}),
where $\hat g(1,X_i)$ and $\hat g(0,X_i)$ are replaced, respectively, by OLS estimators $X_i\hat\beta_1$ and $X_i\hat\beta_0$.

In the case $r=2$, even if the outcome regression in each group is linear, $g(2,\cdot)$ is not necessarily linear.\footnote{ 
Indeed, with assumption \ref{as:linear} [linear outcome] for both $r=0$ and $r=1$ , we have: 
$g(2,X) = p(1,X) X \beta_1 + p(0,X) X \beta_0$ which is not linear, except in two cases: if $p(1,X)$ does not depend on $X$ ; or  if $\beta_1=\beta_0$ except for the intercept.} This is illustrated in Figure \ref{fig:reference} with $X$ continuous (see section \ref{sec:discussion}). 
Nonetheless, \citet{neumark1988employers} shows that if $X$ defines a finite number of types of labor or workers, the equilibrium reference outcome can be estimated from a linear regression model, $g(2,X)=X\beta$, by regressing $Y^\text{obs}$ on $X$ from the full sample. 
Therefore, the unexplained part of the observed difference can be estimated with $\hat\delta_2^{\text{reg}}$ in (\ref{estimatorcounterfactual_regression_d2}),
where $\hat g(2,X_i)$ is replaced by $X_i\hat\beta$. However, the case where $X$ defines only a few categories is rather restrictive, and this approach is not often used in empirical studies. 

{\bf Alternative approach}: The estimator of the unexplained part of the decomposition proposed  by \citet{oaxaca1973male} and \citet{blinder1973wage} can be obtained by
replacing  $Y^\text{obs}_i$ by $X_i\hat\beta_0$ in (\ref{estimatorcounterfactual_regression_d1}), or  $Y^\text{obs}_i$ by $X_i\hat\beta_1$ in (\ref{estimatorcounterfactual_regression_d0}). For instance, with $r=0$, the estimator of the unexplained part of the decomposition (\ref{estimatorcounterfactual_regression_d0}) becomes
\begin{align}
\hat{\delta}_0^{\text{reg}} &= \frac{1}{n_1} \sum_{i=1}^n [ Y^\text{obs}_i - X_i\hat\beta_0 ] D_i 
    = \frac{1}{n_1} \sum_{i=1}^n [ X_i\hat\beta_1 - X_i\hat\beta_0 ] D_i 
    = \bar X_1 (\hat\beta_1 - \hat\beta_0)
    \label{estimatorcounterfactual_regression_OB}
\end{align}
where $\hat\beta_1$ (resp. $\hat\beta_0$) is the OLS estimator of $\beta_1$ (resp. $\beta_0$) obtained from a regression of $Y^{\text{obs}}$ on $X$ in group 1 (resp. group 0). In this approach, outcome regressions are estimated for both groups. The first advantage in the linear case is that the resulting decomposition is directly interpretable. The second advantage is that, if the same mistake is made in both models, such as omitting a variable that linearly affects both outcomes in the same proportions, then the resulting difference will be robust to this misspecification. However, a different set of assumptions is required. Indeed, assumption \ref{as:exogeneity} [exogeneity] can be slightly relaxed, but assumptions \ref{as:linear} [linear outcome] for {both} $r=0$ and $r=1$ are required.

{\bf Invertibility versus common support}: As indicated in section \ref{subsubsection:standardrefoutcomeIPW}, the common support assumption is made when using the IPW approach in the $r\in\{0,1\}$ case. In the case of extrapolation, this assumption is not necessary, but the invertibility of $X'X$ is assumed instead when using OLS to estimate the regression. Although these are two fundamentally different hypotheses, violations of both can arise from similar situations. Suppose we assume $r = 1$. In this case, the counterfactual to be predicted is $\mathbb{E}\big[Y(1) \mid D = 0\big]$. For extrapolation, a prediction model must be estimated on the $D=1$ subgroup, then predicted on the $D=0$ group. Suppose that there exists $X_k$, a dummy variable in the model which indicates that the individual has obtained diploma $k$ if $X_k = 1$, and suppose that all members of group $D=1$ have obtained diploma $k$, and that only some members of group $D=0$ have received it. Let's also assume that $X_k$ is a determinant of $Y(1)$. The effect of $X_k$, omitted from the linear regression estimated by OLS to invert $X'X$, will be captured in the intercept. This does not prevent accurate prediction of $Y(1)$ in group $D=0$ for individuals who have obtained diploma $k$. However, it fails to provide valid predictions of $Y(1)$ for individuals in group $D=0$ who do not hold the diploma. Indeed, the lack of variation in $X_k$ in the other group makes it impossible to estimate $Y(1)$ for this subgroup. At the same time, denoting $X_{-k}$ the set of covariates deprived of $X_k$, $\mathbb{P}(D=1 \mid X_k = 0, X_{-k})=0$, violating the common support assumption. The complete absence of alter egos in the reference group ($D=1$) for a subgroup in $D=0$ simultaneously undermines both strategies, even though the formal assumptions they rely on differ in general. In other words, the same data configuration can simultaneously violate both the invertibility and common support assumptions, although in other cases one assumption may hold while the other fails.

\subsubsection{Approaches based on the propensity score}
\label{subsubsec:ipw}

The previous standard decomposition approach requires the correct specification of the reference outcome regression model. Estimators based on the IPW or the AIPW strategies allow relaxing this hypothesis.
In the parametric case, the estimation of the propensity score is often obtained by maximum likelihood from a probit or logit model:
\begin{assumption} 
\label{as:logit}
{\textbf{[logit/probit probability]}}: Assume $p(1,X)$ to be characterized by the unknown parameters $\beta \in \mathbb{R}^{k}$ where $k$ is the number of columns of $X$
\begin{align}
p(1,X) = F(X \beta) 
\label{eq:linear_proba}
\end{align}
where $F$ is a specified function. It can be the logistic function $F\left(X \beta\right) = \frac{1}{1+e^{-X\beta}}$ leading to the logit regression model, or the cumulative distribution function of the standard Normal distribution $F\left(X \beta\right) = \frac{1}{\sqrt{2 \pi}} \int_{-\infty}^{X\beta} e^{-t^2 / 2} d t$ leading to the probit regression model.
\end{assumption}

\medskip
{\bf The IPW} approach 
relies on the correct specification of a parametric model for the propensity score $p(1,X)$, defined in (\ref{eq:propscore}). 

In the case $r\in\{0,1 \}$, under assumptions \ref{as:ignorability} [ignorability], \ref{as:support} [common support] and \ref{as:logit} [logit/probit], the unexplained part of the observed difference can be estimated with $\hat\delta_1^{\text{ipw}}$ in (\ref{estimatorunexplained_ipw1}) and $\hat\delta_0^{\text{ipw}}$ in (\ref{estimatorunexplained_ipw0}),
where $\hat p(1,X_i)$ is replaced by the maximum likelihood estimator  $F(X_i\hat\beta)$. In practice, observations are trimmed when $F(X_i\hat\beta)$ is close to 0 or 1.

In the case $r=2$, under the same assumptions, except assumption \ref{as:support} [common support], the unexplained part of the observed difference can be estimated with (\ref{estimatorunexplained_ipw2}). It is worth noting that this estimator does not require any restriction on the shape of the outcome regression and remains valid even when the propensity score takes values equal to 0 or 1. Therefore, with a correct specification of the propensity score model, $\hat\delta_2^{\text{ipw}}$ in (\ref{estimatorunexplained_ipw2}) permits to obtain an estimation of the unexplained part of the decomposition based on the Neumark's reference outcome, without $X$ being restricted to a few number of categories and without trimming observations. 

\medskip
{\bf The AIPW} approach combines both strategies. The resulting estimators are doubly robust, meaning that they remain consistent if either the outcome model $g(r,\cdot)$ or the propensity score model $p(1, \cdot)$ is correctly specified.

In the case $r\in\{0,1 \}$, under assumptions \ref{as:support} [common support] and either \ref{as:linear} [linear outcome] and \ref{as:exogeneity} [exogeneity] or either \ref{as:ignorability} [ignorability] and \ref{as:logit} [logit/probit], the unexplained part of the observed difference can be consistently estimated with $\hat\delta_1^{\text{aipw}}$ in (\ref{empiricalestimator:AIPW1}) and $\hat\delta_0^{\text{aipw}}$ in (\ref{empiricalestimator:AIPW0}) using a parametric approach. In practice, observations are trimmed when $\hat p(1,X_i)=F(X_i\hat\beta)$ is close to 0 or 1.

In the case $r=2$, the AIPW estimator remains consistent as long as either the outcome regression model on the pooled sample or the propensity score model is correctly specified. Thus, under the assumptions \ref{as:linear} [linear outcome] and \ref{as:exogeneity} [exogeneity] with $r=2$, or under the assumptions \ref{as:ignorability} [ignorability] and \ref{as:logit} [logit/probit], the unexplained part of the observed difference can be estimated with $\hat\delta_2^{\text{aipw}}$ in (\ref{empiricalestimator:AIPW2}). Consequently, no trimming is required.

\subsection{Machine learning approach}
\label{sec:ML-approach}

Supervised machine learning methods predict an outcome $Y^\text{obs}$ using covariates $X$. Here, we will use the term machine learning methods to refer to classical methods such as lasso, ridge, random forests, boosting, or neural networks (see \citealt{hastie2005elements}). Linear regression estimated by OLS and logit/probit models estimated by maximum likelihood may be presented as basic methods in standard textbooks. However, the term ``machine learning'' more generally refers to flexible methods capable of capturing nonlinear relationships between covariates and the outcome, interaction effects between variables on the outcome, and/or selecting relevant covariates for models. These methods can therefore be used to predict $\mathbb{E}\big[Y^\text{obs}\mid X\big]$, $\mathbb{E}\big[Y^\text{obs}\mid X, D=d \big]$ for any $d\in\{0,1\} $, or $\mathbb{P}\big(D=d \mid X\big)$ for any $d\in\{0,1\}$. This is why they can be considered as alternatives or extensions to the parametric methods presented in the previous sections.

However, these non-parametric methods are often based on a bias/variance tradeoff and therefore provide biased estimates with slower convergence rates than parametric models. This may not be an issue when the focus is on prediction, but it may be problematic when the quality of the estimation is key and inference is required, as in the estimation of a treatment effect. In our application, the parameter of interest is not directly the quantity predicted by supervised machine learning techniques. Rather, we aim to estimate it using plug-in strategies based on either an estimation for the outcome model, the propensity score model, or both. Since these naive plug-in approaches neither guarantee convergence to the target value at a sufficient rate nor allow for the construction of valid confidence intervals or the asymptotic normality of the estimators, alternative strategies have been proposed in the literature to adapt plug-in methods estimated via machine learning while preserving desirable properties for statistical inference.
%
\citet{chernozhukov2018double} developed a theory of inference with machine learning methods and proposed a double/debiased Machine Learning method providing estimators that can achieve $\sqrt{n}$-consistency and asymptotic Normality. 
Let $\theta_0$ be the true value of a low-dimensional parameter of interest and $\eta_0$  the true nuisance parameter. 
In our decomposition framework, $\theta_0$ would be $\Delta_S^{r,d}$  and $\eta_0$ would be $g(r, \cdot)$ and $p(d, \cdot)$, for $r \in \{0, 1, 2\}$ and $d \in \{0,1\}$.  The double/debiased machine learning is based on two crucial ingredients: the Neyman orthogonality condition and cross-fitting. 

\medskip
{\bf Neyman orthogonality}: Let us consider scores $\psi(.)$ that satisfy the following identification condition:
\begin{equation}
    \mathbb{E} [\psi(W;\theta_0,\eta_0)] = 0
\label{eq:moment}
\end{equation}
and the Neyman orthogonality condition
\begin{equation}
    \frac{\partial}{\partial c} \mathbb{E} [\psi(W;\theta_0,\eta_0+c(\eta-\eta_0)] \Big\vert_{c=0} = 0
    \label{eq:orthogonality}
\end{equation}
where $W=(Y^{\mathrm{obs}}, D, X)$ and $c\in [0,1)$.  An estimator of the parameter of interest $\theta_0$ can be obtained from the empirical analog of (\ref{eq:moment}) with $\eta_0$ replaced by a machine learning estimation $\hat\eta$.
The orthogonality condition (\ref{eq:orthogonality}) ensures the estimator of $\theta_0$ to be insensitive to small mistakes in the estimation of nuisance parameters. 
 \citet{chernozhukov2018double} show that, with  scores defined by (\ref{eq:moment}) and (\ref{eq:orthogonality}) and nuisance parameters quite well estimated by $\hat\eta$,\footnote{It is required that $\hat\eta$ converges at a rate faster than $n^{-1/4}$. This rate is achieved by many ML methods.} the estimator of $\theta_0$ obtained from
\begin{equation}
    \frac{1}{n} \sum_{i=1}^n  \psi(W_i;\hat\theta,\hat\eta) = 0
    \quad\text{is asymptotically Normal} \quad
    \sqrt{n} ( \hat\theta - \theta_0) \overset{d}{\longrightarrow} \mathcal{N} (0, \sigma^2)
\label{eq:thetahat}
\end{equation}
where  $\sigma^2=\mathbb{E} [\psi^2(W_i;\theta_0,\eta_0)]$ and $n$ it the total number of observations in the sample. 

\medskip
{\bf Cross-fitting}: To avoid overfitting, often associated with machine learning methods, the parameter of interest and the nuisance parameters should be estimated from two different samples. Let us consider $K$ replications, where an estimator of $\theta_0$ is obtained on a random subsample $I_k$ with $n_k$ observations from
\begin{equation}
    \frac{1}{n_k} \sum_{i\in I_k}  \psi(W_i;\hat\theta_k,\hat\eta_k) = 0
\label{eq:thetaKhat}
\end{equation}
while the nuisance functions $\hat\eta_k$ are estimated from an auxiliary sample, which includes all observations that are not in $I_k$.
This can be done for each replication. Finally, the parameter of interest $\theta_0$ can be estimated by averaging all $\hat\theta_k$
\begin{equation}
\hat\theta = \frac{1}{K} \sum_{k=1}^K \hat\theta_k
\label{eq:thetahat-2}
\end{equation}
and the variance of $\hat\theta$ can  be estimated with $\hat\sigma^2/n$ where
\begin{equation}
    \hat\sigma^2 = 
    \frac{1}{K} \sum_{k=1}^K \frac{1}{n_k}   \sum_{i\in I_k}  \psi^2(W_i;\hat\theta_k,\hat\eta_k)
\label{eq:variance}
\end{equation}
Hence, the standard error of $\hat\theta$ is 
\begin{equation}
    \text{se}(\hat\theta) = \frac{\hat{\sigma}}{\sqrt{n}}
\label{eq:se}
\end{equation}
Estimation and inference of a parameter of interest can then be made with good properties in a quite general framework.


It is known from \citet{chernozhukov2018double} that the score functions based on the AIPW estimators $\hat\delta_1^{\text{aipw}}$ in (\ref{empiricalestimator:AIPW1}), $\hat\delta_0^{\text{aipw}}$ in (\ref{empiricalestimator:AIPW0}) satisfy the Neyman orthogonality condition. Importantly, we extend this result by proving that the estimator $\hat\delta_2^{\text{aipw}}$ in (\ref{empiricalestimator:AIPW2}) which corresponds to the unexplained component of the decomposition when using \citet{neumark1988employers}'s equilibrium outcome as the reference, 
also satisfies the Neyman orthogonality condition. Thus, these estimators can be used with machine learning methods and cross-fitting to estimate the unexplained part of the observed difference in outcomes, as detailed in Algorithm \ref{algo:DR}.

\begin{algorithm}[t!]
\caption{Machine Learning Estimation of the Unexplained Part}\label{algo:DR}
\begin{algorithmic}[1]
\Require 
\Statex Data $(Y_i, D_i, X_i)$ for $i=1,\ldots,N$, where $Y_i$ is the outcome, $D_i$ is the group, and $X_i$ is a vector of covariates.
\Statex $r \in \{0,1,2\}$: choice of a reference outcome
\Statex $p_{\text{thresh}}$: threshold value for trimming, close to 1 if $r=0$ and close to 0 if $r=1$. This threshold is not required for $r=2$
\Statex $K$: number of repetitions of the cross-fitting procedure.
\Ensure Estimated unexplained part $\widehat{\theta}$ depending on the choice of $r$.
\For{$k=1$ to $K$}
\State Take a random subsample $I_k$ of the full sample $I$, and define the auxiliary sample $I_k^c \equiv I \backslash I_k$ which includes all observations that are not in $I_k$.
\State {\em First stage}. Estimate the nuisance functions   $g(r,.)$ and $p(1,.)$ from the auxiliary subsample $I_k^c$, denoted $\hat g_k^c(r,.)$ and $\hat p_k^c(1,.)$, with standard machine learning methods, as lasso, ridge, random forests, boosting or neural networks.\newline
{\bf if} {$r=0$} {\bf then} Trim the data by removing observations in $I_k$ for which $\hat p_k^c(1,X_i) > p_{\text{thresh}}$\newline
{\bf if} {$r=1$} {\bf then} Trim the data by removing observations in $I_k$ for which $\hat p_k^c(1,X_i) < p_{\text{thresh}}$ 
\State {\em Second stage}. Compute $\widehat\theta_{k}$ the estimator of the unexplained part of the decomposition from the subsample $I_k$, based on (\ref{empiricalestimator:AIPW1}) if $r=1$, (\ref{empiricalestimator:AIPW0}) if $r=0$, and 
(\ref{empiricalestimator:AIPW2}) if $r=2$:
\EndFor
\State Compute the final estimator as the average of the $K$ estimates: $\widehat\theta = \frac{1}{K} \sum_{k=1}^K \widehat\theta_{k}$
\end{algorithmic}
\end{algorithm}

\subsubsection{Standard reference outcomes ($r \in \{0,1\}$)}


When the outcome of the disadvantaged group is taken as reference outcome, $r=0$, the score function is equal to
\begin{equation}
\psi_{0}(W;\delta,\eta) := \left(Y^{\mathrm{obs}}-g(0, X)\right)\left(\frac{\mathbbm{1}(D=1)}{\mathbb{P}(D=1)}-\frac{\mathbbm{1}(D=0)}{\mathbbm{P}(D=1)} \frac{p(1, X)}{p(0, X)}\right) - \frac{\mathbbm{1}(D=1)}{\mathbb{P}(D=1)} \,\delta
\label{eq:score0}
\end{equation}
where $p(0,X)=1-p(1,X)$, and the  nuisance  functions are $\eta = \{g(r,\cdot), p(r,\cdot)\}$. \citet{chernozhukov2018double} show that the score function $\psi_{0}$ satisfies the identification condition (\ref{eq:moment}) and the orthogonality condition (\ref{eq:orthogonality}).\footnote{The score function $\psi_{0}$ in (\ref{eq:score0}) is similar to equation 
(5.4) in \citet{chernozhukov2018double} with $\bar g(X)=g(0,X)$, $m(X)=p(0,X)$ and $p=\mathbbm{P}(D=d)$.} 
The estimator of $\delta$ is derived from (\ref{eq:moment}), by setting the expectation of (\ref{eq:score0}) equal to zero. Since $\mathbb{E} [\mathbbm{1}(D=d)] = \mathbb{P}(D=d)$, it is easy to see that the estimator of $\delta$ is equal to  (\ref{estimator:AIPW1}) with $r=0$ and $d=1$.

When the outcome of the advantaged group is taken as reference outcome, $r=1$, the score function is equal to
\begin{equation}
\psi_{1}(W;\delta,\eta) := \left(Y^{\mathrm{obs}}-g(1, X)\right)\left(\frac{\mathbbm{1}(D=0)}{\mathbb{P}(D=0)} -\frac{\mathbbm{1}(D=1)}{\mathbbm{P}(D=0)} \frac{p(0, X)}{p(1, X)}\right) + \frac{\mathbbm{1}(D=0)}{\mathbb{P}(D=0)} \delta
\label{eq:score1}
\end{equation}
The score function $\psi_{1}$ satisfies the identification condition (\ref{eq:moment}) and it respects the orthogonality condition (\ref{eq:orthogonality}) as shown by \citet{chernozhukov2018double}. Indeed, $\psi_{1}$ can be obtained by switching the role of the two groups in (\ref{eq:score0}) and by changing the sign of the second component.

The empirical scores are, respectively, equal to
\begin{align}
 \psi_{0}(W;\hat\delta_0^{\text{aipw}},\hat\eta) &= \left(Y^{\mathrm{obs}}-\hat g(0, X)\right)\left(\frac{n D}{n_1}-
\frac{n(1-D)\,\hat p(1, X)}{n_1[1-\hat p(1, X)]} \right) - \frac{nD}{n_1} \,\hat\delta_0^{\text{aipw}} \label{eq:empscore0}
\\
 \psi_{1}(W;\hat\delta_1^{\text{aipw}},\hat\eta) &= \left(Y^{\mathrm{obs}}-\hat g(1, X)\right)\left(\frac{n(1-D)}{n_0} -\frac{n D [1-\hat p(1, X)]}{n_0\, \hat p(1, X)} \right) + \frac{n(1-D)}{n_0} \, \hat\delta_1^{\text{aipw}}
 \label{eq:empscore1}
\end{align}
They can be used to obtain variances of the estimators, by calculating the empirical variance of these scores, divided by $n$.
\begin{equation}
    {\text{var}}(\hat\delta_r^{\text{aipw}}) = \frac{1}{n^2} \sum_{i=1}^n \psi_r^2 (W,\hat\delta_r^{\text{aipw}} ,\hat\eta)
    \label{eq:empvariance}
\end{equation}

Therefore,  appropriate estimators  of the unexplained part of the difference in means with the reference outcome taken from group $r=0$ (ATT) or group $r=1$ (ATU) are obtained with the AIPW estimators $\hat\delta_1^{\text{aipw}}$ in (\ref{empiricalestimator:AIPW1}) or $\hat\delta_0^{\text{aipw}}$ in (\ref{empiricalestimator:AIPW0}), with variances obtained from (\ref{eq:empscore0})-(\ref{eq:empvariance}). 

\subsubsection{Equilibrium reference outcome ($r = 2$)}

When the equilibrium outcome  is taken as reference outcome, $r=2$, let us consider the score function:
\begin{align}
    \psi_{2}(W;\delta,\eta):=   \Bigg( \frac{1}{\mathbb{P}(D=1)} + \frac{1}{\mathbb{P}(D=0)} \Bigg) \Big( Y^\text{obs} - g(2,X)\Big) \Big( D - p(1,X) \Big)   -  \delta 
    \label{eq:score2}
\end{align}
where $\eta = \{g(2,\cdot), p(1,\cdot)\}$. This score satisfies the identification condition (\ref{eq:moment}) and the orthogonality condition (\ref{eq:orthogonality}) (see proof in Appendix \ref{sec:doubleML}).\footnote{Although the main paper focuses on the unexplained part,  we also provide in Appendix~\ref{app:neymanorthogonalityEXPLAINEDpart} an AIPW estimator for the explained part when $r=2$, its corresponding score function, and a proof that this score satisfies the Neyman-orthogonality condition to perform the full decomposition.}. The estimator of $\theta$ is derived from (\ref{eq:moment}), by setting the expectation of (\ref{eq:score2}) equal to zero.
The empirical analog of $\theta$ is equal to the AIPW estimator in (\ref{empiricalestimator:AIPW2bis}) and then in (\ref{empiricalestimator:AIPW2}). 

The empirical scores are equal to:
\begin{equation}
 \psi_{2}(W;\hat\delta_2^{\text{aipw}},\hat\eta) = \left(\frac{n}{n_1}+\frac{n}{n_0} \right)  \left(Y^{\mathrm{obs}}-\hat g(2, X)\right)\Big( D - \hat p(1, X) \Big) - \hat\delta_2^{\text{aipw}} \label{eq:empscore2}
\end{equation}
They can be used to obtain variances of the estimators, by calculating the empirical variance of these scores, divided by $n$.
\

Therefore, an appropriate estimator of the unexplained part of the difference in means with the equilibrium reference outcome  can be obtained with the AIPW estimator $\hat\delta_2^{\text{aipw}}$ in (\ref{empiricalestimator:AIPW2}), with variance obtained using the empirical scores (\ref{eq:empscore2}) in (\ref{eq:variance}), combined with cross-fitting.


\subsection{Practical considerations}
\label{sec:implementation}

We discuss several practical considerations: calibration, trimming and the use of pooled regression for the equilibrium reference outcome.

\subsubsection{Calibration} 

The use of machine learning models to estimate propensity scores requires some caution. Unlike parametric logit models, ML models for classification are generally not well-calibrated: estimated probabilities may not match empirical probabilities. In other words, an estimated probability of ‘70\% of being treated' is not followed by ‘70\% of units being treated' in samples. A poorly calibrated model can be problematic, since the double orthogonalization requires that outcome and propensity score regressions are quite well estimated. In practice, calibration plots, showing the empirical probability versus the predicted class probability,  can be used to check if a model is not well-calibrated, and calibration correction methods can be used \citep{NiCa:05}.

\subsubsection{Trimming} 
\label{sec:trimming}

When the propensity score is close to zero or one, the variance of the IPW estimators can be huge when $r=0$ and $r=1$, because $\hat p(1,X_i)$ appears at the denominator in (\ref{estimatorunexplained_ipw1}) and (\ref{estimatorunexplained_ipw0}). In practice, it is often recommended to trim observations above/below a threshold to avoid very large weights.
The standard estimators (\ref{estimatorcounterfactual_regression_d1}) and (\ref{estimatorcounterfactual_regression_d0}) may seem appealing, as they do not require propensity score estimation. In fact, these estimators are not transparent about the common support condition. Instead, they extrapolate to regions where there is no data. In linear regression models, it may seem reasonable to assume that the regression model is still the same outside of support, but this is often no longer the case in a non-parametric framework. Therefore, with propensity scores close to zero or one, an accurate estimation of the unexplained part is unlikely with none of the estimators. This problem does not arise with estimators based on the equilibrium reference outcome $r=2$.

\subsubsection{Impact of irrelevant explanatory variables}
\label{sec:neumark}

From (\ref{proposition:Y2|X}), the equilibrium reference outcome can be obtained by regressing $Y^\text{obs}$ on $X$ from the pooled sample, as suggested by \citet{neumark1988employers}. However, \citet{fortin2008gender} and \citet{jann2008stata} argue that this approach may transfer an inappropriate component in the explained part of the decomposition. Our potential outcomes approach clarifies the debate. We show that the correction proposed in the literature lead to a modification of the reference outcome.

%
%
%
Let us consider a linear framework, with the following regression model
\begin{align}
    Y^\text{obs}= \alpha + X\beta + D\gamma + (XD) \delta + \epsilon
    \label{eq:linear_model}
\end{align}
\citet{jann2008stata} suggests to estimate the regression  from the pooled sample including $D$ as additional covariate:
\begin{align}
    Y^\text{obs}= \alpha^* + X\beta^* + D\gamma^* + \epsilon
\end{align}
and to use $\hat{\alpha}^* + X\hat\beta^*$ as reference outcome. It causes a change in the reference outcome, which becomes:
\begin{align}
     Y(3) := \pi Y(1) + (1-\pi) Y(0) \qquad \text{with }\,\, \pi=\mathbb{P}(D=1) \label{eq:jann}
\end{align}
see Appendix \ref{sec:pooled}. It remains to replace the weights in (\ref{eq:y_neumark}) or (\ref{eq:y_newref}) by the unconditional probability $\pi$. 

To compare the implications of the reference outcome selection, it is useful to consider the explained part in a linear framework. For the standard reference outcome, $r=0,1$, and for the Jann's method, $r=3$, we have:
\begin{align*}
\Delta_X^0 & =(\mathbb{E}[X \mid D=1]-\mathbb{E}[X \mid D=0]) \,\beta\\
\Delta_X^1 & =(\mathbb{E}[X \mid D=1]-\mathbb{E}[X \mid D=0]) \,(\beta+\delta)\\
\Delta_X^3 & =(\mathbb{E}[X \mid D=1]-\mathbb{E}[X \mid D=0]) \,(\beta+\pi\delta)
\end{align*}
The explained parts are defined by the mean differences in $X$ between the two groups, multiplied by the coefficients of group 0, 1 or a weighted average between the two as $\beta+\pi\delta=(1-\pi)\beta+\pi(\beta+\delta)$. For the Neumark or equilibrium reference outcome, we have:
\begin{align}
\Delta_X^2 & =(\mathbb{E}[X \mid D=1]-\mathbb{E}[X \mid D=0]) \, \beta \nonumber\\
& \quad +(\mathbb{E}[p(1,X) \mid D=1]-\mathbb{E}[p(1,X) \mid D=0]) \,\gamma  \label{eq:D2_linear}\\
& \quad +(\mathbb{E}[X p(1,X) \mid D=1]-\mathbb{E}[X p(1,X) \mid D=0])\, \delta \nonumber
\end{align}
The explained part is still defined by the mean differences in $X$ between the two groups, but also by the mean differences in $p(1,X)$ and in $Xp(1,X)$. See Appendix \ref{sec:pooled} for detailed calculations. 

The new terms, multiplied by $\gamma$ and $\delta$ in (\ref{eq:D2_linear}), come from the fact that $p(1,X)$ is used to define the equilibrium reference outcome $Y(2)$. At first sight, the presence of these two  terms in the explained part seem consistent, since they differ from zero when the covariates $X$ are different between the two groups. However, they may be undesirable depending on the choice of $X$. To see why, let us consider the case where $X$ is correlated to $D$ but it does not explain $Y^\text{obs}$, that is:
\begin{align*}
    \text{when } \quad  \beta=\delta =0, \quad \text{then } \quad \Delta^0_X = \Delta_X^1 = \Delta_X^3 = 0 \quad \text{ and } \quad \Delta_X^2 \neq 0
\end{align*}
For instance, if $X$ denotes ``having long hair'' and the difference in means refers to the gender wage gap, using $r=0,1,3$  would lead to the correct conclusion that mean wage differences are not explained by $X$, while using $r=2$ would lead to the wrong conclusion that it explains wage differences. \citet{fortin2008gender} and \citet{jann2008stata} suggest  adding the variable $D$ in the regression model to correct this issue. In Appendix \ref{sec:pooled}, we demonstrate that this estimation method leads to change the reference outcome as it estimates on average a quantity similar to $\Delta_X^3$ and not to $\Delta_X^2$ in the linear case. We would therefore like to emphasize that Jann's strategy is not a ``correction'' of an error made when estimating the explained component in the case $r=2$. Rather, it is a change of reference outcome and target parameter that removes the influence of irrelevant covariates. It does not allow probabilities to be conditioned on the relevant covariables as the probabilities in equation (\ref{eq:jann}) are unconditional. This analysis highlights the importance of appropriate covariate selection in the case $r=2$, as using an inappropriate covariate
may introduce undesirable component in the explained part. 

\section{Applications}
\label{sec:application}

To illustrate the findings of the previous sections, we use two datasets and estimate the unexplained part of the difference in means of log-wages between two groups. We consider several reference outcomes: of the disadvantaged group ($r=0$), advantaged group ($r=1$), and the proposed equilibrium reference ($r=2$). We  calculate several estimators:
\begin{itemize}
\item \texttt{Reg}: the estimator $\hat\delta_r^\text{reg}$ based on the estimation of the outcome regression only, as defined in (\ref{estimatorcounterfactual_regression_d1}), (\ref{estimatorcounterfactual_regression_d0}), and (\ref{estimatorcounterfactual_regression_d2}). 
\item \texttt{IPWu}: the estimator $\hat\delta_r^\text{ipw}$ based on the estimation of the propensity score regression only, as defined in (\ref{estimatorunexplained_ipw1}), (\ref{estimatorunexplained_ipw0}), and (\ref{estimatorunexplained_ipw2}).
\item \texttt{IPWn}: the estimator $\hat\delta_r^\text{nipw}$ based on the estimation of the propensity score regression only and with normalized weights, as defined in (\ref{estimatorunexplained_ipw1n}), (\ref{estimatorunexplained_ipw0n}), and (\ref{estimatorunexplained_ipw2n}).
\item \texttt{AIPWu}: the doubly-robust estimator $\hat\delta_r^\text{aipw}$ based on the estimation of both regressions, as defined in (\ref{empiricalestimator:AIPW1}), (\ref{empiricalestimator:AIPW0}) or (\ref{empiricalestimator:AIPW2}).  
\item \texttt{AIPWn}: the doubly-robust estimator $\hat\delta_r^\text{naipw}$ based on the estimation of both regressions and with normalized weights, as defined in (\ref{empiricalestimator:AIPW1n}), (\ref{empiricalestimator:AIPW0n}) or (\ref{empiricalestimator:AIPW2n}).  
\end{itemize}

First, we consider a parametric approach, with an OLS estimation of the outcome linear regression and a logit estimation of the propensity score regression. The standard errors are computed by bootstrapping pairs, that is, by resampling randomly and with replacement lines in the original sample (\cite{Freedman:81}). The number of bootstrap replications is equal to $B=999$.

Then, we consider a machine learning approach, with a gradient boosting estimation of the outcome and propensity score regressions (see section \ref{sec:ML-approach}). Estimates of the unexplained part of the difference in means are obtained from Algorithm \ref{algo:DR}, repeating the cross-fitting procedure  $K=100$ times where the full sample is randomly split into two subsamples $I_k$ and $I_k^c$ of equal size. A statistical framework has been developed for AIPW estimators combined with cross-fitting (ML-AIPWu, ML-AIPWn). This is the so-called double machine learning method (\cite{chernozhukov2018double}). Therefore, we show standard errors for AIPW estimators only. They are computed from (\ref{eq:variance})-(\ref{eq:se}) with  score functions (\ref{eq:empscore0}), (\ref{eq:empscore1}) and  (\ref{eq:empscore2}) for un-normalized weights and  with (\ref{eq:empscore0n}), (\ref{eq:empscore1n}) and (\ref{eq:empscore2n}) for normalized weights. Standard and IPW estimates are calculated using a machine learning approach for indicative purposes.

\subsection{Application 1: Wages of native and foreign-born workers}

The data are obtained from the application in \citet{Hlav:18}, on labor wages and demographic characteristics of 712 employed Hispanic workers in the Chicago metropolitan area.\footnote{The data can be obtained  
from the chicago data frame, included in the oaxaca R package.} The difference in means of log-wages between natives and foreign-born workers is equal to 0.1434. The covariates selected in the wage equation and in the propensity score regression are age, gender, and education.\footnote{We use the variables {\em age, female, LTHS, some.college, college} and {\em advanced.degree} from the chicago dataset.} The quantity of interest is the part of the mean difference unexplained by this set of characteristics.

\begin{table}[htpb!]
    \centering
    \begin{tabular}{l@{\qquad\qquad\,}cccccc}
    \hline
    && \multicolumn{2}{c}{Parametric} && \multicolumn{2}{c}{ML} \\
    \cline{3-4}
    \cline{6-7}
        &&  coef & s.e && coef & s.e.  \\
        \hline
        \hline
        \multicolumn{6}{l}{\tt Reference $r=0$}\\\cline{1-1}
        Reg   && 0.0664 & {\em (0.0449)} && 0.1375 & - \\
        IPWu && 0.1274 & {\em (0.0619)} && -0.0793 & - \\
        IPWn  && 0.0824 & {\em (0.0469)} && 0.1299 & - \\
        AIPWu && 0.0869 & {\em (0.0470)} && 0.1159 & {\em (0.0776)} \\
        AIPWn && 0.0873 & {\em (0.0470)} && 0.1191 & {\em (0.0694)} \\
        \hline
        \multicolumn{6}{l}{\tt Reference $r=1$}\\\cline{1-1}
        Reg   && 0.1222 & {\em (0.0462)} && 0.1844 & - \\
        IPWu  && 0.1567 & {\em (0.1118)} && 0.4501 & - \\
        IPWn  && 0.0816 & {\em (0.0487)} && 0.1342 & - \\
        AIPWu && 0.0708 & {\em (0.0493)} && 0.1184 & {\em (0.0796)} \\
        AIPWn && 0.0723 & {\em (0.0482)} && 0.1342 & {\em (0.0700)} \\
        \hline
        \multicolumn{6}{l}{\tt Reference $r=2$}\\\cline{1-1}
        Reg   && 0.0751 & {\em (0.0322)} && 0.1135 & - \\
        IPWu  && 0.0793 & {\em (0.0322)} && 0.1593 & - \\
        IPWn  && 0.0793 & {\em (0.0322)} && 0.1265 & - \\
        AIPWu && 0.0793 & {\em (0.0322)} && 0.1090 & {\em (0.0458)} \\
        AIPWn && 0.0793 & {\em (0.0322)} && 0.1083 & {\em (0.0456)} \\
        \hline
    \end{tabular}
    \caption{Estimates of the unexplained part of the difference in means of log-wages between native and foreign-born workers, with standard errors in parenthesis. The standard errors for the first column are calculated using the pairwise bootstrap method \citep{Freedman:81}. The standard errors for the AIPW estimator in the last column are calculated using equation~(\ref{eq:se}).
    }
    \label{tab:appli1}
\end{table}
Table \ref{tab:appli1} shows estimates and standard errors of the unexplained part of the difference in means of log-wages between native and foreign-born workers. Several reference outcomes are considered: of foreign-born workers ($r=0$), native workers ($r=1$), and the equilibrium reference ($r=2$). The first and second columns show the results of the parametric approach ({\tt OLS}). The third and fourth columns show the results of the machine learning approach ({\tt ML}).
We can see that: 
\begin{enumerate}
    \item 
With reference outcome ($r=0,1$), the results are sensitive to the choice of the estimator (Reg, IPWu, IPWn, AIPWu AIPWn) and the estimation method (OLS, ML). For instance, with $r=0$ the parametric approach (OLS-Reg) yields an estimate of 0.0664, while the machine learning approach (ML-AIPW)  returns values almost twice as high: 0.1159 and 0.1191.
\item  IPW estimators with normalized weights (IPWn) are more consistent with other estimators and more accurate than  IPW  without normalized weights (IPWu), when $r=0$ and $r=1$, as suggested by \citet{BuDiMc:14}. However, there is little difference between the two when $r = 2$.

\item
With equilibrium reference outcome ($r=2$), the results are more stable and more precise.
\item
The results are quite different between OLS and ML, suggesting that parametric regressions may not be correctly specified.
\end{enumerate}
Finally, the double machine learning method provides estimators robust to misspecification of regression models with valid inference (ML-AIPWu and ML-AIPWn). Using this method with several reference outcomes $r=0,1,2$, the results suggest that the part of the difference in means which is unexplained by the set of characteristics is quite large, much higher than that obtained by a parametric approach (OLS).

\subsection{Application 2: Gender wage gap}

The data are obtained from \citet{CheHaSp:15}, on labor wages and socio-economic characteristics of 29{,}217 employed men and women in the United States in 2012.\footnote{The data can be obtained  
from the cps2012 data frame, included in the hdm R package.} The difference in means of log-wages between men and women is equal to 0.2608. The covariates selected in the wage equation and in the propensity score regression are, among others, on marital status, education, and experience.\footnote{We use the variables {\em widowed, divorced, separated, nevermarried, hsd08, hsd911, hsg, cg, ad, mw, so, we, exp1, exp2, exp3} from the cps2012 dataset.} The quantity of interest is the part of the mean difference unexplained by this set of characteristics.

\begin{table}[htpb!]
    \centering
    \begin{tabular}{l@{\qquad\qquad\,}cccccc}
    \hline
    && \multicolumn{2}{c}{Parametric} && \multicolumn{2}{c}{ML} \\
    \cline{3-4}
    \cline{6-7}
        &&  coef & s.e && coef & s.e.  \\
        \hline
        \hline
        \multicolumn{6}{l}{\tt Reference $r=0$}\\\cline{1-1}
        Reg     && 0.2884 & {\em (0.0071)} &&   0.2880  &  - \\
        IPWu    && 0.2878 & {\em (0.0072)} &&   0.2914  &  - \\
        IPWn    && 0.2897 & {\em (0.0072)} &&   0.2850  &  - \\
        AIPWu   && 0.2883 & {\em (0.0072)} &&   0.2876  & {\em (0.0099)} \\
        AIPWn   && 0.2883 & {\em (0.0072)} &&   0.2876  & {\em (0.0099)} \\
        \hline
        \multicolumn{6}{l}{\tt Reference $r=1$}\\\cline{1-1}
        Reg     && 0.2707 & {\em (0.0072)} &&   0.2719 & - \\
        IPWu    && 0.2670 & {\em (0.0074)} &&   0.2593 & - \\
        IPWn    && 0.2691 & {\em (0.0072)} &&   0.2685 & - \\
        AIPWu   && 0.2701 & {\em (0.0072)} &&   0.2694 & {\em (0.0099)} \\
        AIPWn   && 0.2701 & {\em (0.0072)} &&   0.2694 & {\em (0.0099)} \\
        \hline
        \multicolumn{6}{l}{\tt Reference $r=2$}\\\cline{1-1}
        Reg     && 0.2716 & {\em (0.0069)} &&   0.2719 & - \\
        IPWu    && 0.2716 & {\em (0.0069)} &&   0.2698 & - \\
        IPWn    && 0.2716 & {\em (0.0069)} &&   0.2696 & - \\
        AIPWu   && 0.2716 & {\em (0.0069)} &&   0.2706 & {\em (0.0095)} \\
        AIPWn   && 0.2716 & {\em (0.0069)} &&   0.2706 & {\em (0.0095)} \\
        \hline
    \end{tabular}
    \caption{Estimates of the unexplained part of the difference in means of log-wages between men and women workers, with standard errors in parenthesis. The standard errors for the first column are calculated using the pairwise bootstrap method \citep{Freedman:81}. The standard errors for the AIPW estimator in the last column are calculated using equation~(\ref{eq:se}). 
    }
    \label{tab:appli2}
\end{table}

Table \ref{tab:appli2} shows the unexplained part estimates of the difference in means of log-wages between men and women, that is, of unexplained part of the gender wage gap. Several reference outcomes are considered: of women workers ($r=0$), men workers ($r=1$), and the equilibrium reference ($r=2$). The first and second columns show the results of the parametric approach ({\tt OLS}). The third and fourth columns show the results of the machine learning approach ({\tt ML}).
We can see that the results are very similar, whatever the estimator (Reg, IPWu, IPWn, AIPWu, AIPWn) or estimation method (OLS, ML) selected. 
They suggest that the parametric regressions for outcome and propensity score are correctly specified.  This may be partly due to the fact that most explanatory variables are categorical (12 out of 13), and that interactions between covariates do not play a significant role. Overall, the results show that the gender wage gap is largely unexplained by the set of characteristics.

\section{Conclusion}
\label{sec:conclusion}

In this paper, we reformulate the problem of inequality decomposition using a potential outcomes framework. This approach clarifies the link between the unexplained component and objects like the Average Treatment Effect on the Treated (ATT) or Untreated (ATU). Changing the reference outcome - typically the potential outcome of one group - to an equilibrium outcome, as originally proposed by \citet{neumark1988employers}, becomes straightforward within this framework.

Our main contribution is to analyze the unexplained component using this equilibrium reference outcome, demonstrating that it allows for a doubly robust estimator satisfying Neyman orthogonality, enabling estimation via double machine learning \citep{chernozhukov2018double}. This provides an alternative to extrapolation or trimming strategies, relaxing two key assumptions: the common support condition and the need for correctly specified, low-dimensional parametric models. We present a simple algorithm to estimate this object, along with empirical applications and a broader methodological discussion aimed at applied researchers.

We highlight methodological issues that machine learning alone does not fully resolve. When choosing the potential outcome of one of the two groups as the reference outcome, which is the standard choice, adding many variables correlated with the group indicator to the models can lead to excessive trimming. This is equivalent to performing the decomposition on only a subgroup of the population. When using Neumark's reference outcome, trimming is not necessary, as the AIPW estimator no longer requires division by the propensity score. However, the key assumption becomes that the reference outcome is a weighting of the potential outcomes of the two groups by the propensity score. Thus, we implicitly assume that any variable correlated with the group indicator enters Neumark's outcome model, even when it is not a variable determining the potential outcomes models taken separately. This makes the decomposition based on Neumark's reference outcome sensitive to including irrelevant variables. This problem was already known in the linear case, and one solution proposed in the literature was to include the group indicator as a control in the estimation of the observed outcome model on the pooled sample. Thanks to our approach based on potential outcomes, we can demonstrate that this solution is not entirely satisfactory because it implicitly changes the reference outcome. We therefore conclude that applied researchers must carefully reflect on the relevance of the chosen variables, even when using these data-driven methods to analyze inequalities. Additionally, calibration issues should also receive special attention from empiricists, as they can significantly impact the robustness and interpretability of empirical findings.

\printbibliography[heading=bibliography]

@book{becker1957economics,
  title={The economics of discrimination: an economic view of racial discrimination},
  author={Becker, Gary Stanley},
  year={1957},
  publisher={University of Chicago}
}

@article{arrow1972some,
  title={Some mathematical models of race discrimination in the labor market},
  author={Arrow, Kenneth J},
  journal={Racial discrimination in economic life},
  pages={187--204},
  year={1972},
  publisher={Lexington Books Lexington, MA}
}

@article{jann2008stata,
  title={The {B}linder-{O}axaca decomposition for linear regression models},
  author={Ben Jann},
  journal={Stata journal},
  volume={8},
  number={4},
  pages={453--479},
  year={2008}
}

@incollection{efron1992bootstrap,
  title={Bootstrap methods: another look at the jackknife},
  author={Efron, Bradley},
  booktitle={Breakthroughs in statistics: Methodology and distribution},
  pages={569--593},
  year={1992},
  publisher={Springer}
}

@article{fortin2008gender,
  title={The gender wage gap among young adults in the {U}nited {S}tates: The importance of money versus people},
  author={Fortin, Nicole},
  journal={Journal of Human Resources},
  volume={43},
  number={4},
  pages={884--918},
  year={2008},
  publisher={University of Wisconsin Press}
}

@UNPUBLISHED{Hlav:18,
author =	{M. Hlavac},
title =		{oaxaca: {B}Linder-{O}axaca decomposition in {R}},
year =		2018,
note =		{R package version 0.1.4. https://CRAN.R-project.org/package=oaxaca}
}

@UNPUBLISHED{CheHaSp:15,
author =	{Victor Chernozhukov and Christian Hansen and Martin Spindler},
title =		{High-Dimensional Metrics in {R}},
year =		2015,
note =		{R package version 0.1. https://CRAN.R-project.org/package=hdm}
}

@inproceedings{NiCa:05,
author = {Niculescu-Mizil, Alexandru and Caruana, Rich},
title = {Predicting Good Probabilities with Supervised Learning},
year = {2005},
isbn = {1595931805},
publisher = {Association for Computing Machinery},
address = {New York, NY, USA},
url = {https://doi.org/10.1145/1102351.1102430},
doi = {10.1145/1102351.1102430},
booktitle = {Proceedings of the 22nd International Conference on Machine Learning},
pages = {625–632},
numpages = {8},
location = {Bonn, Germany},
series = {ICML '05}
}

@article{BuDiMc:14,
 author = {M. Busso and J. DiNardo and J. McCrary},
 journal = {Review of Economics and Statistics},
 number = {96},
 pages = {885--897},
 title = {New evidence on the finite sample properties of propensity score reweighting and matching estimators},
 volume = {5},
 year = {2014}
}

@article{Freedman:81,
	title = {Bootstrapping regression models},
	volume = {9},
	pages = {1218–-1228},
	number = {6},
	journal = {The Annals of Statistics},
	author = {Freedman, D. A.},
	year = {1981},
}

@article{kitagawa1955components,
  title={Components of a difference between two rates},
  author={Kitagawa, Evelyn M},
  journal={Journal of the American Statistical Association},
  volume={50},
  number={272},
  pages={1168--1194},
  year={1955},
  publisher={Taylor \& Francis}
}

@article{oaxaca1973male,
  author={Oaxaca, Ronald L.},
  title={Male-female wage differentials in urban labor markets},
  journal={International Economic Review},
  pages={693--709},
  year={1973},
  publisher={JSTOR}
}

@article{robins1995semiparametric,
  title={Semiparametric efficiency in multivariate regression models with missing data},
  author={Robins, James M and Rotnitzky, Andrea},
  journal={Journal of the American Statistical Association},
  volume={90},
  number={429},
  pages={122--129},
  year={1995},
  publisher={Taylor \& Francis}
}

@article{kline2011oaxaca,
  title={Oaxaca-{B}linder as a reweighting estimator},
  author={Kline, Patrick},
  journal={American Economic Review},
  volume={101},
  number={3},
  pages={532--537},
  year={2011},
  publisher={American Economic Association}
}

@article{blinder1973wage,
  title={Wage discrimination: reduced form and structural estimates},
  author={Blinder, Alan S},
  journal={Journal of Human resources},
  pages={436--455},
  year={1973},
  publisher={JSTOR}
}

@Article{DiFoLe:96,
  author    = {DiNardo, J. and Fortin, N. M. and Lemieux, T.},
  title     = {Labor Market Institutions and the Distribution of Wages, 1973-1992: A Semiparametric Approach},
  journal   = {Econometrica},
  year      = {1996},
  volume    = {64},
  pages     = {1001-1044},
}

@INCOLLECTION{FoLeFi:11,
	author =			 {N. Fortin and T. Lemieux and S. Firpo},
	title =				 {Decomposition Methods in Economics},
	year =				 2011,
	booktitle =		 {Handbook of Labor Economics},
	editor = 			{O. Ashenfelter and D. Card},
	publisher =		 {North-Holland: Elsevier},
	volume =			 {A},
	pages =				 {1-102},
}

@article{neumark1988employers,
  title={Employers' discriminatory behavior and the estimation of wage discrimination},
  author={Neumark, David},
  journal={Journal of Human resources},
  pages={279--295},
  year={1988},
  publisher={JSTOR}
}

@Article{FiFoLe:18,
  author    = {Firpo, Sergio and Fortin, Nicole M and Lemieux, Thomas},
  title     = {Decomposing wage distributions using recentered influence function regressions},
  journal   = {Econometrics},
  year      = {2018},
  volume    = {6},
  pages     = {28},
  number    = {2},
}

@article{strittmatter2021gender,
  title={The Gender Pay Gap Revisited with Big Data: Do Methodological Choices Matter?},
  author={Strittmatter, Anthony and Wunsch, Conny},
  journal={arXiv preprint arXiv:2102.09207},
  year={2021}
}

@book{angrist2009mostly,
  title={Mostly Harmless Econometrics: An Empiricist's Companion},
  author={Angrist, Joshua D and Pischke, J{\"o}rn-Steffen},
  year={2009},
  publisher={Princeton university press}
}

@Article{FiFoLe:09,
  author    = {Firpo, Sergio and Fortin, Nicole M and Lemieux, Thomas},
  title     = {Unconditional quantile regressions},
  journal   = {Econometrica},
  year      = {2009},
  volume    = {77},
  pages     = {953-973},
  number    = {3},
  url       = {http://onlinelibrary.wiley.com/doi/10.3982/ECTA6822/abstract},
  owner     = {Xuezhu},
  publisher = {Wiley Online Library},
  timestamp = {2013.08.21},
}

@article{imbens2004nonparametric,
  title={Nonparametric estimation of average treatment effects under exogeneity: A review},
  author={Imbens, Guido W},
  journal={Review of Economics and statistics},
  volume={86},
  number={1},
  pages={4--29},
  year={2004},
  publisher={MIT Press}
}

@BOOK{imbens_causal_2015,
	author =	{G. W. Imbens and D. B. Rubin},
	title =		{Causal Inference for Statistics, Social, and Biomedical Sciences},
	year =		{2015},
	publisher =	{Cambridge University Press},
}

@article{chernozhukov2018double,
  title={Double/debiased machine learning for treatment and structural parameters},
  author={Chernozhukov, Victor and Chetverikov, Denis and Demirer, Mert and Duflo, Esther and Hansen, Christian and Newey, Whitney and Robins, James},
  year={2018},
  journal = {Econometrics Journal},
  volume={21},
  pages={C1--C68}
}

@article{rosenbaum_central_1983,
	title = {The central role of the propensity score in observational studies for causal effects},
	volume = {70},
	pages = {41--55},
	number = {1},
	journal = {Biometrika},
	author = {Rosenbaum, Paul R and Rubin, Donald B},
	year = {1983},
	note = {Publisher: Oxford University Press},
}

@article{rubin1974estimating,
  added-at = {2009-10-28T04:42:52.000+0100},
  author = {Rubin, D.B.},
  biburl = {https://www.bibsonomy.org/bibtex/26726a1f1acbb430a96bd2639ec3b3df0/jwbowers},
  date-added = {2009-07-15 09:31:44 -0400},
  date-modified = {2009-07-15 09:31:44 -0400},
  interhash = {0f398cef9d64fc209a3c36a29169faf1},
  intrahash = {6726a1f1acbb430a96bd2639ec3b3df0},
  journal = {Journal of Educational Psychology},
  keywords = {imported},
  number = 5,
  pages = {688--701},
  timestamp = {2009-10-28T04:42:54.000+0100},
  title = "{Estimating Causal Effects of Treatments in Randomized and Nonrandomized Studies}",
  volume = 66,
  year = 1974
}

@article{oaxaca2025oaxaca,
  title={Oaxaca-Blinder meets Kitagawa: What is the link?},
  author={Oaxaca, Ronald L. and Sierminska, Eva},
  journal={PLoS One},
  volume={20},
  number={5},
  pages={e0321874},
  year={2025},
  publisher={Public Library of Science San Francisco, CA USA}
}

@article{hastie2005elements,
  title={The elements of statistical learning: data mining, inference and prediction},
  author={Hastie, Trevor and Tibshirani, Robert and Friedman, Jerome and Franklin, James},
  journal={The Mathematical Intelligencer},
  volume={27},
  number={2},
  pages={83--85},
  year={2005},
  publisher={Springer}
}


\appendix

\section{Estimators without and with normalized weights}
\label{sec:estimation}

The most natural way to estimate the quantity from (\ref{observablemean}) is to use the empirical counterpart as an estimator: 
\begin{align}
    \widehat{\text{Observable}}_{d} :&= \frac{1}{n_d} \sum_{i=1}^n Y^\text{obs}_i \, \mathbbm{1}\{D_i= d\}
    \label{estimatorobservable}
\end{align}
where $n_{d}$ is the number of individuals in group $d$, and $n$ is the number of individuals in both groups. The observable part is nothing but the average outcome in one group.

\subsection{Outcome regression}

\paragraph{Standard reference outcomes ($r \in \{0, 1\}$)} \ \\ \ \\
Let $\hat{g}(r, \cdot)$ be an estimation of $g(r, \cdot)$ defined in (\ref{eq:grX}), obtained from the subsample of observations belonging to the reference group $D=r$. From (\ref{counterfactualmean}), (\ref{estimatorobservable}), and (\ref{eq-Delta-reg-standard}), the unexplained advantage of group 1 and the unexplained disadvantage of group 0 can be estimated as follows:
\begin{align}
    \widehat{\Delta}^{r,d}_S
    = \frac{1}{n_d} \sum_{i=1}^n [ Y^\text{obs}_i - \hat{g}(r, X_i)  ]\mathbbm{1}\{D_i=d\} ,
    \qquad r\not=d
    \label{estimatorcounterfactual_regression}
\end{align}

With the outcome of the advantaged group 1 taken as the reference outcome ($r=1$), the unexplained part of the observed difference in outcomes (\ref{deltar}) can be estimated as:
\begin{align}
\widehat{\delta}_1^{\text{reg}} &= - \widehat{\Delta}^{1,0}_S =  - \frac{1}{n_0} \sum_{i=1}^n [ Y^\text{obs}_i - \hat{g}(1, X_i)  ](1-D_i),
    \label{estimatorcounterfactual_regression_d1bis}
\end{align}
where $n_{1}$ is the number of individuals in group $1$, and $n$ is the number of individuals in both groups. 

With the outcome of the disadvantaged group 0 taken as reference outcome ($r=0$), the unexplained part of the observed difference in outcomes (\ref{deltar}) can be estimated as:
\begin{align}
\widehat{\delta}_0^{\text{reg}} &= \widehat{\Delta}^{0,1}_S = \frac{1}{n_1} \sum_{i=1}^n [ Y^\text{obs}_i - \hat{g}(0, X_i) ] D_i 
    \label{estimatorcounterfactual_regression_d0bis}
\end{align}
where $n_{0}$ is the number of individuals in group $0$, and $n$ is the number of individuals in both groups. 

\paragraph{Equilibrium reference outcome ($r=2$)} \ \\ \ \\
Let $\hat{g}(2, \cdot)$ be an estimation of $g(2, \cdot)$ defined in (\ref{eq:g2X}), obtained from the full sample.  
With the equilibrium outcome taken as the reference outcome, the overall unexplained part of the observed difference in outcomes (\ref{deltar}) combines the unexplained parts of both the advantaged and disadvantaged groups. It can be estimated as:
\begin{align}
\widehat{\delta}_2^{\text{reg}} =
    \widehat{\Delta}^{2,1}_S - \widehat{\Delta}^{2,0}_S
    = \frac{1}{n_1} \sum_{i=1}^n [ Y^\text{obs}_i - \hat{g}(2, X_i) ] D_i  
    - 
    \frac{1}{n_0} \sum_{i=1}^n [ Y^\text{obs}_i - \hat{g}(2, X_i)  ](1-D_i)
    \label{estimatorcounterfactual_regression_d2bis}
\end{align}
where $n_{0}$ and $n_{1}$ are the number of individuals, respectively, in groups $0$ and $1$, and $n=n_0+n_1$.

\subsection{Inverse Probability Weighting}

\paragraph{Standard reference outcome  ($r \in \{0, 1\}$)} \ \\ \ \\
Let $\hat{p}(d,X)$ be an estimation of $p(d, X)$ defined in (\ref{eq:propscore}), obtained from the full sample.  From (\ref{counterfactualmean}), (\ref{eq-Delta-IPW-standard}), and (\ref{estimatorobservable}), the unexplained advantage of group 1 and the unexplained disadvantage of group 0 can be estimated as follows:
\begin{align}
    \widehat{\Delta}^{r,d}_S
    &= \frac{1}{n_d} \sum_{i=1}^n \left[ Y^\text{obs}_i \mathbbm{1}\{D_i=d\} - Y^\text{obs}_i \mathbbm{1}\{D_i=r\} \frac{\hat{p}(d,X_i)}{\hat{p}(r,X_i)} \right]  \label{estimatorunexplained_ipw1first} \\
    &= \frac{1}{n_d} \sum_{i=1}^n Y^\text{obs}_i \left[ \mathbbm{1}\{D_i=d\} - (1-\mathbbm{1}\{D_i=d\}) \frac{\hat{p}(d,X_i)}{1-\hat{p}(d,X_i)} \right] \nonumber \\
    &= \frac{1}{n_d} \sum_{i=1}^n Y^\text{obs}_i \left[ \frac{\mathbbm{1}\{D_i=d\} - \hat{p}(d,X_i)}{1 -\hat{p}(d,X_i) }  \right],
    \qquad \text{when }r \in \{0,1\} \setminus d
    \label{estimatorunexplained_IPW}
\end{align}
since we have $\mathbbm{1}\{D_i=r\}=1-\mathbbm{1}\{D_i=d\}$ and $\hat p(r,X_i)=1-\hat p(d,X_i)$ when $r \in \{0,1\}\setminus d$.

With the outcome of the advantaged group 1 taken as reference outcome ($r=1$), the unexplained part of the observed difference in outcomes (\ref{deltar}) can be estimated as:
\begin{align}
\widehat{\delta}_1^{\text{ipw}} &= -\widehat{\Delta}^{1,0}_S 
= \frac{1}{n_0} \sum_{i=1}^n Y^\text{obs}_i \left[ \frac{D_i-\hat{p}(1, X_i)}{\hat{p}(1, X_i)} \right]  
    \label{estimatorunexplained_ipw1bis}
\end{align}
With the outcome of the disadvantaged group 0 taken as reference outcome ($r=0$), the unexplained part of the observed difference in outcomes (\ref{deltar}) can be estimated as:
\begin{align}
\widehat{\delta}_0^{\text{ipw}} &= \widehat{\Delta}^{0,1}_S 
= \frac{1}{n_1} \sum_{i=1}^n Y^\text{obs}_i \left[ \frac{D_i-\hat{p}(1, X_i)}{1-\hat{p}(1, X_i)} \right]  
    \label{estimatorunexplained_ipw0bis}
\end{align}

It is clear from (\ref{estimatorunexplained_ipw1}) and (\ref{estimatorunexplained_ipw0}) that  $\hat{p}(1,X_i)$ must be far enough from zero or one to have a denominator substantially different from zero. In other words, the common support condition is required to estimate the unexplained part of the mean decomposition.

Normalized estimators can also be obtained from (\ref{estimatorunexplained_ipw1first}):
\begin{align}
\widehat{\delta}_1^{\text{nipw}} &= -\widehat{\Delta}^{1,0}_S  =  -\frac{1}{n_0} \sum_{i=1}^n Y^\text{obs}_i (1-D_i) + \sum_{i=1}^n  Y^\text{obs}_i w_i^{(1)} = \sum_{i=1}^n Y^\text{obs}_i \left[ w_i^{(1)} -\frac{1-D_i}{n_0}  \right]
\label{estimatorunexplained_ipw1n} \\
\widehat{\delta}_0^{\text{nipw}} &= \widehat{\Delta}^{0,1}_S = \frac{1}{n_1} \sum_{i=1}^n Y^\text{obs}_i D_i - \sum_{i=1}^n  Y^\text{obs}_i w_i^{(0)} = \sum_{i=1}^n Y^\text{obs}_i \left[ \frac{D_i}{n_1} - w_i^{(0)} \right]
    \label{estimatorunexplained_ipw0n}
\end{align}
where the weights are normalized so they sum up to one in finite samples (\cite{BuDiMc:14}):
\begin{align}
 w_i^{(1)} &= \frac{D_i[1-\hat p(1,X_i)]}{\hat p(1,X_i)} / \sum_{i=1}^n \frac{D_i[1-\hat p(1,X_i)]}{\hat p(1,X_i)} 
 \label{estimatorunexplained_ipw1_weights} \\
 w_i^{(0)} &= \frac{(1-D_i)\hat p(1,X_i)}{1-\hat p(1,X_i)} / \sum_{i=1}^n \frac{(1-D_i)\hat p(1,X_i)}{1-\hat p(1,X_i)}
\label{estimatorunexplained_ipw0_weights} 
\end{align}
The unnormalized estimators in (\ref{estimatorunexplained_ipw1bis}) and (\ref{estimatorunexplained_ipw0bis}) are equivalent to, respectively, (\ref{estimatorunexplained_ipw1n})  and (\ref{estimatorunexplained_ipw0n}) with weights $w_i^{(1)} =\frac{D_i[1-\hat p(1,X_i)]}{\hat p(1,X_i)} /n_0$ and $w_i^{(0)} = \frac{(1-D_i)\hat p(1,X_i)}{1-\hat p(1,X_i)} /n_1$, which do not necessarily sum up to one in finite sample. See \citet{BuDiMc:14} and \citet{strittmatter2021gender} for more details.

\paragraph{Equilibrium reference outcome ($r=2$)} \ \\ \ \\
From (\ref{counterfactualmean}), (\ref{eq-Delta-IPW-equilibrium}), and (\ref{estimatorobservable}), the unexplained advantage of group 1 and the unexplained disadvantage of group 0 can be estimated as follows:
\begin{align}
    \widehat{\Delta}^{2,d}_S
    = \frac{1}{n_d} \sum_{i=1}^n Y^\text{obs}_i  \big[\mathbbm{1}\{D_i=d\} - \hat p(d, X_i)  \big]  
    \label{estimatorunexplained_ipw2-b}
\end{align}

With the equilibrium outcome as reference outcome, the overall unexplained part of the observed difference in outcomes (\ref{deltar}) combines the unexplained parts of both the advantaged and disadvantaged groups. It can be estimated as:
\begin{align}
    \widehat\delta_2^\text{ipw}=\widehat{\Delta}^{2,1}_S - \widehat{\Delta}^{2,0}_S
    = \left( \frac{1}{n_1}+ \frac{1}{n_0} \right) \sum_{i=1}^n Y^\text{obs}_i  \big[D_i - \hat p(1, X_i)  \big]  
    \label{estimatorunexplained_ipw2b}
\end{align}
Unlike (\ref{estimatorunexplained_ipw1}) and (\ref{estimatorunexplained_ipw0}), the estimator in (\ref{estimatorunexplained_ipw2b}) is not expressed as a ratio and, thus, no restriction is required on $\hat p(1,X_i)$. Therefore, Assumption \ref{as:support} [common support] is not required to estimate the unexplained part of the observed difference in outcomes.

A normalized estimator can also be obtained from (\ref{estimatorunexplained_ipw2-b}):
\begin{align}
    \widehat\delta_2^\text{nipw} &=\widehat{\Delta}^{2,1}_S - \widehat{\Delta}^{2,0}_S
    = \frac{1}{n_1} \sum_{i=1}^n Y^\text{obs}_i D_i - Y^\text{obs}_i w_i^{(1)} - \frac{1}{n_0} \sum_{i=1}^n Y^\text{obs}_i (1-D_i) + Y^\text{obs}_i w_i^{(0)} \\
    &= \sum_{i=1}^n Y^\text{obs}_i \left[ \frac{D_i}{n_1}  - \frac{1-D_i}{n_0} + v_i^{(0)}  -  v_i^{(1)} \right]
    \label{estimatorunexplained_ipw2n}
\end{align}
where the weights are normalized so they sum up to one in finite samples:
\begin{align}
 v_i^{(1)} = \frac{\hat p(1,X_i)}{\sum_{i=1}^n \hat p(1,X_i)} \qquad\text{and}\qquad
 v_i^{(0)} = \frac{1-\hat p(1,X_i)}{\sum_{i=1}^n [1-\hat p(1,X_i)]}
     \label{estimatorunexplained_ipw2_weights}
\end{align}
The unnormalized estimator in (\ref{estimatorunexplained_ipw2b}) is equivalent to (\ref{estimatorunexplained_ipw2n}) with weights $v_i^{(1)} =\hat p(1,X_i)/n_0$ and $v_i^{(0)} = [1-\hat p(1,X_i)]/n_1$, which do not necessarily sum up to one in finite sample.

\subsection{Augmented Inverse Probability Weighting}

\paragraph{Standard reference outcomes ($r \in \{0, 1\}$)} \ \\ \ \\
From (\ref{counterfactualmean}), (\ref{estimator:AIPW1}), and (\ref{estimatorobservable}), the unexplained advantage of group 1 and the unexplained disadvantage of group 0 can be estimated as follows:
\begin{align}
        \widehat\Delta_S^{r,d} 
        &= \frac{1}{n_d} \sum_{i=1}^n  \Big( Y^\text{obs}_i - \hat g(r,X_i)\Big)  \frac{\mathbbm{1}(D_i=d)\hat p(r,X_i) - \mathbbm{1}(D_i=r) \hat p(d,X_i)}{\hat p(r,X_i)}    \label{estimatordelta:AIPW1} 
\end{align}
When $r\in\{0,1\} \setminus d$, we have $p(r,X)=1-p(d,X)$ and $\mathbbm{1}(D=r)=1-\mathbbm{1}(D=d)$.  Therefore, the unexplained part of the difference in means (\ref{deltar}) is  equal to
\begin{align}
        \widehat\delta_{1}^{\text{aipw}} &= - \widehat\Delta_S^{1,0} =
     \frac{1}{n_0} \sum_{i=1}^n  \Big( Y^\text{obs}_i - \hat g(1,X_i)\Big)  \frac{D_i - \hat p(1,X_i) }{\hat p(1,X_i)}    \label{empiricalestimator:AIPW1bis}  \\
        \widehat\delta_0^{\text{aipw}} &= \widehat\Delta_S^{0,1} =
     \frac{1}{n_1} \sum_{i=1}^n  \Big( Y^\text{obs}_i - \hat g(0,X_i)\Big)  \frac{D_i - \hat p(1,X_i) }{1-\hat p(1,X_i)}    \label{empiricalestimator:AIPW0bis} 
\end{align}
It is clear from (\ref{empiricalestimator:AIPW1bis}) and (\ref{empiricalestimator:AIPW0bis}) that  $\hat{p}(1,X_i)$ must be far enough from zero or one to have a denominator substantially different from zero. In other words, the common support condition is required to estimate the unexplained part of the mean decomposition.

Normalized estimators can be obtained from (\ref{estimatordelta:AIPW1}):
\begin{align}
        \widehat\delta_{1}^{\text{naipw}} &= - \widehat\Delta_S^{1,0} =
     \sum_{i=1}^n  \Big( Y^\text{obs}_i - \hat g(1,X_i)\Big)  \left[ w_i^{(1)} -  \frac{1-D_i}{n_0}  \right]
     \label{empiricalestimator:AIPW1n}  \\
        \widehat\delta_0^{\text{naipw}} &= \widehat\Delta_S^{0,1} =
     \sum_{i=1}^n  \Big( Y^\text{obs}_i - \hat g(0,X_i)\Big)  \left[  \frac{D_i}{n_1} - w_i^{(0)} \right]   \label{empiricalestimator:AIPW0n} 
\end{align}
with the weights $w_i^{(1)}$ and $w_i^{(0)}$ are defined in (\ref{estimatorunexplained_ipw1_weights}) and (\ref{estimatorunexplained_ipw0_weights}).
The unnormalized estimators in (\ref{empiricalestimator:AIPW1bis}) and (\ref{empiricalestimator:AIPW0bis}) are equivalent to, respectively, (\ref{empiricalestimator:AIPW1n})  and (\ref{empiricalestimator:AIPW0n}) with the weights $w_i^{(1)} =\frac{D_i[1-\hat p(1,X_i)]}{\hat p(1,X_i)} /n_1$ and $w_i^{(0)} = \frac{(1-D_i)\hat p(1,X_i)}{1-\hat p(1,X_i)} /n_0$, which do not necessarily sum up to one in finite sample.

\paragraph{Equilibrium reference outcome ($r=2$)} \ \\ \ \\
From (\ref{counterfactualmean}), (\ref{estimator:AIPW2:2d}) and (\ref{estimatorobservable}), the unexplained advantage of group 1 and the unexplained disadvantage of group 0 can be estimated as follows:
\begin{align}
        \widehat\Delta_S^{2,d} 
        &= \frac{1}{n_d} \sum_{i=1}^n  \Big( Y^\text{obs}_i - \hat g(2,X_i)\Big)  \Big( \mathbbm{1}(D_i=d) -  \hat p(d,X_i) \Big)   \label{estimatordelta:AIPW} 
\end{align}
Therefore, the unexplained part of the difference in means (\ref{deltar}) is  equal to
\begin{align}
        \widehat\delta_2^{\text{aipw}} = \widehat\Delta_S^{2,1} - \widehat\Delta_S^{2,0} =
     \Big(\frac{1}{n_1}+\frac{1}{n_0}\Big) \sum_{i=1}^n  \Big( Y^\text{obs}_i - \hat g(2,X_i)\Big) \Big( D_i - \hat p(1,X_i) \Big)    \label{empiricalestimator:AIPW2bis} 
\end{align}
Unlike (\ref{empiricalestimator:AIPW1bis}) and (\ref{empiricalestimator:AIPW0bis}), the estimator in (\ref{empiricalestimator:AIPW2bis}) is not expressed as a ratio and it does not require the assumption \ref{as:support} [common support].

A normalized estimator can  be obtained from (\ref{estimatordelta:AIPW}):
\begin{align}
    \widehat\delta_2^\text{naipw} &=\widehat{\Delta}^{2,1}_S - \widehat{\Delta}^{2,0}_S
    = \sum_{i=1}^n \left( Y^\text{obs}_i - \hat g(2,X_i) \right) \left[ \frac{D_i}{n_1}  - \frac{1-D_i}{n_0} + v_i^{(0)}  -  v_i^{(1)} \right]
    \label{empiricalestimator:AIPW2n}
\end{align}
where the weights $v_i^{(1)}$ and $v_i^{(1)}$ are defined in (\ref{estimatorunexplained_ipw2_weights}).

The unnormalized estimator in (\ref{empiricalestimator:AIPW2bis}) is equivalent to (\ref{estimatorunexplained_ipw2n}) with weights $v_i^{(1)} =\hat p(1,X_i)/n_1$ and $v_i^{(0)} = [1-\hat p(1,X_i)]/n_0$, which do not necessarily sum up to one in finite sample.

\section{Double Machine Learning method}
\label{sec:doubleML}


\subsection{Neyman orthogonality condition}

Firstly, we show that the AIPW estimator with the equilibrium reference outcome $\hat\delta_2^\text{aipw}$ check the Neyman orthogonality condition and is then robust to small mistakes in the ML estimate of the outcome regression and of the propensity score. Let us consider the score function:
\begin{equation}
  \psi_{2}(W;\theta,\eta) =
\Bigg( \frac{1}{\mathbb{P}(D=1)} + \frac{1}{\mathbb{P}(D=0)} \Bigg) \Big( Y^\text{obs} - g(2,X)\Big) \Big( D - p(1,X) \Big) - \theta
\end{equation}
where $W=(Y^{\text{obs}},D,X)$ is the set of observations, $\theta$ is a low-dimensional parameter of interest and $\eta=\{g(2,X), p(1,X) \}$ is a set of high-dimensional nuisance functions, with true values $\theta_0$ and $\eta_0=\{g_0(2,X), p_0(1,X) \}$.
We have,
\begin{align}
\mathbb{E} [ \psi_{2}(W;\theta_0,\eta_0+c(\eta-\eta_0))]  = \mathbb{E}\Bigg[ \Bigg( \frac{1}{\mathbb{P}(D=1)} + \frac{1}{\mathbb{P}(D=0)} \Bigg) 
\Big( Y^\text{obs} - g_0(2,X) \\ - c\big[g(2,X)-g_0(2,X)\big]  \Big)
\Big( D - p_0(1,X) - c\big[ p(1,X)-p_0(1,X) \big] \Big) \Bigg] - \theta_0
\end{align}
Here, $c(\eta-\eta_0)$ measures small deviations to the true nuisance functions, as those obtained by replacing the true values by ML estimates. The first derivative with respect to $c$ is equal to:
\begin{align*}
\partial_c \mathbb{E} [ \psi_{2}(W;\theta_0,\eta_0+c(\eta-\eta_0))] = \mathbb{E}\Bigg[ \Bigg( \frac{1}{\mathbb{P}(D=1)} + \frac{1}{\mathbb{P}(D=0)} \Bigg)  \Bigg(  \Big(g_0(2, X)-g(2, X)\Big)\Big(D-p_0(1,X) \\ -c\big[ p(1,X)-p_0(1,X)\big]\Big) -\Big(p(1,X)-p_0(1,X)\Big)\left(Y^\text{obs} - g_0(2, X)-c\big[g(2, X)-g_0(2,X)\big]\right)\Bigg)\Bigg]
\end{align*}
The double robustness plays a key role. Indeed, recall that $$\mathbb{E}[Y^\text{obs}\mid X] = g_0(2,X) \text{ \hspace{0.5cm} and \hspace{0.5cm}} \mathbb{E}[D\mid X]=p_0(1,X)$$ Therefore, the first derivative evaluated at $c=0$ is equal to zero by the law of iterated expectations, since each term of the expression has a null conditional expected value. 

\subsection{Standard errors of AIPW estimators with normalized weights}

Standard errors of AIPW estimators with normalized weights can be obtained with the following score functions:
\begin{align}
 \psi_{0}(W;\hat\delta_0^{\text{naipw}},\hat\eta) &= \left(Y^{\mathrm{obs}}-\hat g(0, X)\right)\left(\frac{n D}{n_1}-
 n w^{(0)} \right) - \frac{nD}{n_1} \,\hat\delta_0^{\text{naipw}} \label{eq:empscore0n}
\\
 \psi_{1}(W;\hat\delta_1^{\text{naipw}},\hat\eta) &= \left(Y^{\mathrm{obs}}-\hat g(1, X)\right)\left(\frac{n(1-D)}{n_0} - n w^{(1)} \right) + \frac{n(1-D)}{n_0} \, \hat\delta_1^{\text{naipw}}
 \label{eq:empscore1n}
\\
 \psi_{2}(W;\hat\delta_2^{\text{naipw}},\hat\eta) &= \left(Y^{\mathrm{obs}}-\hat g(2, X)\right)\left( \frac{n D}{n_1} - \frac{n(1-D)}{n_0} + n v^{(0)} - n v^{(1)} \right) -  \,\hat\delta_2^{\text{naipw}} 
\label{eq:empscore2n}
\end{align}
where $w^{(1)}$ is defined in (\ref{estimatorunexplained_ipw1_weights}), $w^{(0)}$ in (\ref{estimatorunexplained_ipw0_weights}), $v^{(1)}$ and $v^{(0)}$ in (\ref{estimatorunexplained_ipw2_weights}), 
$\hat\delta_0^{\text{naipw}}$ in (\ref{empiricalestimator:AIPW0n}),
$\hat\delta_1^{\text{naipw}}$ in (\ref{empiricalestimator:AIPW1n}), and
$\hat\delta_2^{\text{naipw}}$ in (\ref{empiricalestimator:AIPW2n}).
These score functions sum up to zero over the full sample and can be plugged in (\ref{eq:empvariance}) to obtain variances of AIPW estimators (see section \ref{sec:ML-approach})



\section{AIPW estimator for the explained part and the equilibrium reference outcome}

\label{app:neymanorthogonalityEXPLAINEDpart}

Using the definition of the explained part $\Delta^r_X$ in equation (\ref{decompo_refr}), and identifying the counterfactual terms by their AIPW estimators defined in equation (\ref{estimatorcounterfactual:AIPW2reg}), it follows that:
\[
\begin{aligned}
\Delta_X^2
&= \mathbb{E}\left[\left(Y^{\text{obs}} - g(2,X)\right)\frac{p(1,X)}{\mathbb{P}(D=1)} + \frac{D}{\mathbb{P}(D=1)} g(2,X)\right] \\
&\quad - \mathbb{E}\left[\left(Y^{\text{obs}} - g(2,X)\right)\frac{p(0,X)}{\mathbb{P}(D=0)} + \frac{1-D}{\mathbb{P}(D=0)} g(2,X)\right] \\
&= \mathbb{E}\left[\left(Y^{\text{obs}} - g(2,X)\right)\left(\frac{p(1,X)}{\mathbb{P}(D=1)} - \frac{p(0,X)}{\mathbb{P}(D=0)}\right)
+ g(2,X)\left(\frac{D}{\mathbb{P}(D=1)} - \frac{1-D}{\mathbb{P}(D=0)}\right)\right] \\
&= \mathbb{E}\left[\left(Y^{\text{obs}} - g(2,X)\right)\left(\frac{p(1,X)}{\mathbb{P}(D=1)} - \frac{1-p(1,X)}{1-\mathbb{P}(D=1)}\right)
+ g(2,X)\left(\frac{D}{\mathbb{P}(D=1)} - \frac{1-D}{1-\mathbb{P}(D=1)}\right)\right].
\end{aligned}
\]
Noting that
$
\frac{a}{b} - \frac{1-a}{1-b} 
= \frac{a(1-b) - (1-a)b}{b(1-b)} 
= \frac{a-b}{b(1-b)}$, we can simplify the formula above:
\[
\begin{aligned}
\Delta_X^2
&= \mathbb{E}\left[\left(Y^{\text{obs}} - g(2,X)\right)\left(\frac{p(1,X)-\mathbb{P}(D=1)}{\mathbb{P}(D=1)\,\mathbb{P}(D=0)}\right)
+ g(2,X)\left(\frac{D-\mathbb{P}(D=1)}{\mathbb{P}(D=1)\,\mathbb{P}(D=0)}\right)\right] \\
&= \mathbb{E}\left[\frac{1}{\mathbb{P}(D=1)\,\mathbb{P}(D=0)}\left(\left(Y^{\text{obs}} - g(2,X)\right)\big(p(1,X)-\mathbb{P}(D=1)\big)
+ g(2,X)\big(D-\mathbb{P}(D=1)\big)\right)\right].
\end{aligned}
\]
To check that this AIPW estimator of the explained part respects the Neyman orthogonality condition, we can derive a score function:
\[
\psi_{2,X}(W, \theta, \eta)= \frac{1}{\mathbb{P}(D=1)\,\mathbb{P}(D=0)}\left[\left(Y^{\text{obs}} - g(2,X)\right)\big(p(1,X)-\mathbb{P}(D=1)\big)
+ g(2,X)\big(D-\mathbb{P}(D=1)\big)\right] - \theta
\]
where $W=(Y^{\text{obs}},D,X)$ is the set of observations, $\theta$ is a low-dimensional parameter of interest and $\eta=\{g(2,X), p(1,X) \}$ is a set of nuisance functions, with true values $\theta_0$ and $\eta_0=\{g_0(2,X), p_0(1,X) \}$.
We have
\begin{align*}
\psi_{2,X}(W, \theta_0, \eta_0 + c (\eta - \eta_0))
= \;& \frac{1}{\mathbb{P}(D=1)\,\mathbb{P}(D=0)}
\Bigg[
  \Big(Y^{\text{obs}} - \Big(g_0(2,X) + c\big(g(2,X)-g_0(2,X)\big)\Big)\Big) \\
& \quad \times \Big(\,\big(p_0(1,X) + c\big(p(1,X)-p_0(1,X)\big)\big) - \mathbb{P}(D=1)\,\Big) \\
& \quad + \Big(g_0(2,X) + c\big(g(2,X)-g_0(2,X)\big)\Big)\,\big(D-\mathbb{P}(D=1)\big)
\Bigg] 
- \theta_0
\end{align*}
Here, $c(\eta-\eta_0)$ measures small deviations to the true nuisance functions. The first derivative of the expectation of this score, with respect to $c$, is equal to:
\begin{align*}
\partial_c \mathbb{E} [ \psi_{2,X}(W;\theta_0,\eta_0+c(\eta-\eta_0))] & = \mathbb{E} \Bigg[ \frac{1}{\mathbb{P}(D=1)\,\mathbb{P}(D=0)}  \Big[ (D - \mathbb{P}(D=1))\,(g(2,X) - g_0(2,X)) \\
& \quad - (g(2,X) - g_0(2,X))\,
    \Big(-\mathbb{P}(D=1) + c\,(p(1,X) - p_0(1,X)) + p_0(1,X)\Big) \\
& \quad + (p(1,X) - p_0(1,X))\,
    \Big(Y^{\text{obs}} - c\,(g(2,X)+g_0(2,X)) - g_0(2,X)\Big) \Big]  \Bigg]
\end{align*}
Hence, evaluated on $c=0$, we have:
\begin{align*}
\left.\partial_c\,
\mathbb{E}\!\left[ \psi_{2,X}\big(W;\,\theta_0,\,\eta_0+c(\eta-\eta_0)\big) \right]\right|_{c=0}
 & = \frac{1}{\mathbb{P}(D=1)\,\mathbb{P}(D=0)} \mathbb{E} \Bigg[ (D - \mathbb{P}(D=1))\,(g(2,X) - g_0(2,X)) \\
& \quad - (g(2,X) - g_0(2,X))\,
    \Big( p_0(1,X)-\mathbb{P}(D=1)\Big) \\
& \quad + (p(1,X) - p_0(1,X))\,
    \Big(Y^{\text{obs}}  - g_0(2,X)\Big) \Bigg] \\
     & = \frac{1}{\mathbb{P}(D=1)\,\mathbb{P}(D=0)} \mathbb{E} \Bigg[ (D -  p_0(1,X))\,(g(2,X) - g_0(2,X)) \\
& \quad + (p(1,X) - p_0(1,X))\,
    \Big(Y^{\text{obs}}  - g_0(2,X)\Big) \Bigg]
\end{align*}
Using the fact that $\mathbb{E}\big[Y^\text{obs} \mid X \big] = g_0(2,X)$ and that $\mathbb{E}\big[D \mid X \big] = p_0(1,X)$, along with the law of iterated expectations:
\begin{align*}
\left.\partial_c\,
\mathbb{E}\!\left[ \psi_{2,X}\big(W;\,\theta_0,\,\eta_0+c(\eta-\eta_0)\big) \right]\right|_{c=0} = 0
\end{align*}
Once again, the double robustness plays a key role, ensuring that the score based on the AIPW estimator for the explained part also verifies the Neyman-orthogonality condition.

\section{Equilibrium reference outcome and pooled regression}
\label{sec:pooled}

\subsection{Pooled regression with dummy variable}

Let us consider the following model
\begin{align}
y 
=  X\beta_0  + (XD) \, \theta + \varepsilon 
=
\begin{cases}
X\beta_0 + \varepsilon& \text{if } D=0\\
X\beta_1 + \varepsilon & \text{if } D=1
\end{cases} 
\label{eq:jann_true} 
\end{align}
where $\theta=\beta_1-\beta_0$.
Applying a conditional expectation to both parts of the  equation, we have
\begin{align}
E(y|D) = E(X|D) \,\beta_0 + E(X|D) D \,\theta
\end{align}
when $\mathbb{E}(\varepsilon|D)=0$. Subtracting the two equations, we obtain
\begin{align}
y-{E(y|D)} &= [X-  {E(X|D)}] \,(\beta_0+D \theta) + \varepsilon \\
\tilde y &= \tilde X \beta_0+ (\tilde XD)\, \theta + \varepsilon \label{eq:jann_ytilde}
\end{align}
where $\tilde y$  ($\tilde X$)  are the residuals of the regression of, $y$  ($X$) on the dummy variable $D$. These variables are cleaned of any effect of $D$. 
Let us consider the following model:
\begin{align}
\tilde y &= \tilde X \, \beta^* + \varepsilon
\label{eq:ytilde_constant}
\end{align}
The OLS estimator is given by
\begin{align}
\hat   \beta^* & = ( \tilde X^\top  \tilde X)^{-1}  \tilde X^\top [\tilde X\beta_0  + (\tilde XD) \, \theta + \varepsilon] \\
& = \beta_0+D\theta + ( \tilde X^\top  \tilde X)^{-1}  \tilde X^\top \varepsilon
\label{eq:ytilde_constant2}
\end{align}
Thus, we have
\begin{align}
\mathbb{E} (\hat   \beta^*) & =  \beta_0+ \pi \theta  = (1-\pi)\beta_0 + \pi\beta_1 
\label{eq:ytilde_constant2-bis}
\end{align}
where $\pi=\mathbb{E} (D)$ is the proportion of units with $D=1$. From the Frisch-Waugh-Lovell (FWL) theorem, we know that this coefficient would also be obtained from 
\begin{align}
 y &=  X \,\beta^* +  D\gamma^* + \varepsilon 
\label{eq:jann_ols}
\end{align}
which is the pooled regression with an additional dummy variable proposed by \citet{jann2008stata}. 
Thus, the two following strategies are equivalent on average:
\begin{itemize}
\item using as reference  outcome $X\hat\beta^*$ from (\ref{eq:jann_ols}) estimated on the full sample
\item using as reference outcome $X[(1-\pi)\hat\beta_0+\pi\hat\beta_1]$, with $\hat\beta_0$ and $\hat\beta_1$  obtained from model  (\ref{eq:jann_true}) estimated on the two subsamples
\end{itemize}
The last strategy remains to consider as reference outcome: $Y(3) = (1-\pi) Y(0) + \pi Y(1)$.

\subsection{Explained parts of the difference in means}

For $r=0$ and the linear regression outcome  (\ref{eq:linear_model}), the explained part of the  difference in means 
in (\ref{decompo_refr}) is equal to
\begin{align*}
    \Delta^0_X &= \mathbb{E} [ Y(0) | D=1 ] - \mathbb{E} [ Y(0) | D=0 ] \\
    &= \mathbb{E} [ \alpha+X \beta | D=1 ] - \mathbb{E} [ \alpha+X \beta | D=0 ] \\
    &= (\mathbb{E} [ X  | D=1 ] - \mathbb{E} [ X | D=0 ] )\beta
\end{align*}
For $r=1$, we have
\begin{align*}
    \Delta^1_X &= \mathbb{E} [ Y(1) | D=1 ] - \mathbb{E} [ Y(1) | D=0 ] \\
    &= \mathbb{E} [ \alpha + \gamma + X (\beta+\delta) | D=1 ] - \mathbb{E} [ \alpha + \gamma + X (\beta+\delta) | D=0 ] \\
    &= (\mathbb{E} [ X  | D=1 ] - \mathbb{E} [ X | D=0 ] ) (\beta+\delta)
\end{align*}
For $r=2$, we have
\begin{align*}
    \Delta^2_X &= \mathbb{E} [ Y(2) | D=1 ] - \mathbb{E} [ Y(2) | D=0 ] \\
    &= \mathbb{E} [ p(1,X) Y(1) + [1-p(1,X)] Y(0) | D=1 ] \\ &= \quad\,- \mathbb{E} [ p(1,X) Y(1) + [1-p(1,X)] Y(0) | D=0 ] \\
    &= \mathbb{E} [ p(1,X) [\alpha + \gamma + X (\beta+\delta)] + [1-p(1,X)](\alpha+X \beta ) | D=1 ] \\  & \quad\, - 
    \mathbb{E} [ p(1,X) [\alpha + \gamma + X (\beta+\delta)] + [1-p(1,X)](\alpha+X \beta ) | D=0 ]\\
    &= \mathbb{E} [ p(1,X) \gamma + p(1,X) X \delta + \alpha + X \beta  | D=1 ] \\  & \quad\, - 
    \mathbb{E} [ p(1,X) \gamma + p(1,X) X \delta + \alpha + X \beta  | D=0 ]\\
    &= (\mathbb{E} [ X | D=1 ] - \mathbb{E} [ X | D=0 ] )\beta \\
     & \quad\, + (\mathbb{E} [ p(1,X) | D=1 ] - \mathbb{E} [ p(1,X) | D=0 ] )\gamma \\ 
     & \quad\, + (\mathbb{E} [ p(1,X)X | D=1 ] - \mathbb{E} [ p(1,X)X | D=0 ] ) \delta
\end{align*}
For $r=3$, we have
\begin{align*}
    \Delta^2_X &= \mathbb{E} [ Y(3) | D=1 ] - \mathbb{E} [ Y(3) | D=0 ] \\
    &= \mathbb{E} [ \pi Y(1) + (1-\pi) Y(0) | D=1 ] - \mathbb{E} [ \pi Y(1) + (1-\pi) Y(0) | D=0 ] \\
    &= \mathbb{E} [ \pi [\alpha + \gamma + X (\beta+\delta)] + [1-\pi](\alpha+X \beta ) | D=1 ] \\  & \quad\, - 
    \mathbb{E} [ \pi [\alpha + \gamma + X (\beta+\delta)] + [1-\pi](\alpha+X \beta ) | D=0 ]\\
    &= \mathbb{E} [ \pi \gamma + \pi X \delta + \alpha + X \beta  | D=1 ] 
    - \mathbb{E} [ \pi \gamma + \pi X \delta + \alpha + X \beta  | D=0 ]\\
    &= (\mathbb{E} [ X | D=1 ] - \mathbb{E} [ X | D=0 ] ) (\beta+\pi\delta) 
\end{align*}

\end{document}